\documentclass[11pt,a4paper]{article}
\usepackage[utf8]{inputenc}
\usepackage[T1]{fontenc}
\usepackage[english]{babel}
\usepackage{lmodern,microtype}
\usepackage{amsmath,amssymb,amsthm,mathtools}
\usepackage[margin=2.5cm]{geometry}
\usepackage{booktabs,array,longtable,graphicx}
\usepackage{enumitem,needspace}
\usepackage[hidelinks]{hyperref}
\setlist{itemsep=1pt,topsep=3pt,leftmargin=*}
\newtheorem{theorem}{Theorem}[section]
\newtheorem{proposition}[theorem]{Proposition}
\newtheorem{lemma}[theorem]{Lemma}
\newtheorem{corollary}[theorem]{Corollary}
\theoremstyle{definition}
\newtheorem{definition}[theorem]{Definition}

\newcommand{\hash}{\mathsf{H}}
\newcommand{\Enc}{\mathsf{Enc}}

\title{RecHash: Segmented Hash-Entangled Chains\\for Stateful Sequential Authentication}
\author{Luis Adrián Lizama-Pérez\\
\small Universidad Autónoma Metropolitana, Lerma Campus, Mexico\\
\small Department of Information Systems and Communications\\
\small\texttt{l.lizama@ler.uam.mx}}
\date{Version 3.0 --- October 29, 2026}
\begin{document}\maketitle
\begin{abstract}
RecHash is a stateful sequential authentication construction based on paired hash-chain openings and an authenticated record history. It develops the decreasing position relation of the 2019 hash-entangled construction, whose exact integer normalization yields a two-coordinate constant-sum antichain. Direct proportional reuse is constrained by integer capacity and multiplicative contraction. The main construction instead assigns each record to a predetermined chain segment, with complementary positions encoding each digest block. All digests fit every scheduled slot, independently of preceding messages. For an $\ell$-bit digest distributed over $k$ equal-length chain pairs, balanced allocation gives $\lfloor L/(2^{\lceil\ell/k\rceil}+1)\rfloor$ slots. A 256-bit digest, 32 pairs, and length 65535 provide 254 application records followed by a renewal certificate, with 2048 bytes of openings per record for 256-bit chain outputs. We give a complete segmented state transition, a worked trace, exact allocation and resource results, and a renewal-free security-composition theorem conditional on authenticated history and an adaptive one-step guarantee. The analysis relates the construction to constant-sum encodings and standardized hash-based signatures; a concrete post-quantum reduction and security across renewal remain open.
\end{abstract}
\section{Introduction}
RecHash studies sequential authentication using pairs of hash chains whose disclosed values become the verification heads for subsequent records. Its principal construction assigns successive messages to disjoint segments of the chains. Within each segment, a digest block selects two complementary positions: increasing one position decreases the other. Across segments, every new position lies before the positions disclosed in the preceding segment. This arrangement preserves the local constraint against forward substitution while providing a predetermined number of records per chain-generation phase.

The construction addresses a particular setting: one signer emits an ordered stream of records, and verifiers obtain its authoritative current state from an authenticated-history service, such as Singlechain or an equivalent mechanism. The service fixes order and membership; the chain openings authenticate a candidate extension. The signer waits for acceptance before disclosing the next record. This setting makes state synchronization part of the protocol and distinguishes the resulting records from signatures intended for independent verification under a fixed public key.

The historical starting point is the hash-entangled construction published in 2019~\cite{Lizama2019}, which combined digest partitioning, a decreasing relation between chain positions, and Singlechain history. Its HMAC extension appeared online in 2021 and in a 2022 journal volume~\cite{Lizama2022}; the 2023 review makes the relevant opening convention explicit~\cite{Lizama2023}. We retain this chronology and recover the exact integer relation behind the construction. Under a positive coordinate normalization, its admissible position pairs form a constant-sum antichain. Constant-sum and incomparable encodings also appear in earlier and later signature research~\cite{BosChaum1993,Zhang2023,Tiwari2026}. The equivalence established here concerns that local algebraic structure, not the complete protocols or their security models.

The paper develops two allocation profiles with different roles. DIRECT follows proportional recursion and exposes the capacity restrictions caused by integer positions and repeated contraction. SEGMENTED is the main operational construction: it schedules translated constant-sum codes in advance, so every digest fits each designated slot regardless of earlier messages. DIRECT supplies the analytical comparison explaining this design choice. No automatic transition between the profiles is part of the protocol.

For an $\ell$-bit digest divided among $k$ chain pairs of equal length $L$, balanced partitioning gives $\lfloor L/(2^{\lceil\ell/k\rceil}+1)\rfloor$ slots. With $\ell=256$, $k=32$, and $L=65535$, the construction uses 64 physical chains and provides 255 slots: 254 application records and one renewal certificate under the continuing-operation policy. Each record contains 64 hash openings, or 2048 bytes when chain outputs are 256 bits, in addition to its context and payload.

The results comprise the historical algebraic normalization, exact capacity and allocation bounds, a state-transition specification, and analytic resource counts. Security is expressed through a conditional composition theorem: an adaptive one-step bound, together with authenticated history and exclusive state use, bounds accumulated impersonation probability over an admissible renewal-free horizon. Establishing that one-step bound from concrete hash assumptions, including quantum adversaries, remains a separate task. The following sections place the construction in context, define and illustrate the segmented protocol, analyze proportional allocation, and develop the security and parameter results.
\section{Historical development and related constructions}\label{sec:related}
The relation used in the 2019 construction can be recovered by separating opening positions from verification distances. Let $L$ be the signing-chain length, $Q=L-h_0$ the previous verifier anchor, and $p=L-h_1$ the signing position. The quantity $\delta=h_1-h_0h_1/L$ in that construction satisfies
\begin{equation}\label{eq:historical-weighted}
q=\delta=Q(1-p/L),\qquad Qp+Lq=LQ.
\end{equation}
Some original instructions publish $f^{Q-\delta}(s_0)$ and traverse by $\delta$, interchanging the opening index and its complementary distance. The convention realizing~\eqref{eq:historical-weighted} publishes $f^\delta(s_0)$ and traverses by $Q-\delta$; it is explicit in the 2023 review~\cite{Lizama2019,Lizama2023}. The HMAC development adds reciprocal authentication and a historical transaction database~\cite{Lizama2022}. The normalization below applies to the recovered decreasing relation, with integer admissibility stated explicitly.

Coordinatewise order means $(p,q)\preceq(p',q')$ if $p\le p'$ and $q\le q'$. An antichain contains no two distinct comparable elements. Here this order describes forward traversal of indices; collisions or cycles in concrete hash functions are separate computational events.

\begin{theorem}[Exact lattice normalization]\label{thm:equivalence}
Let $L,Q$ be positive integers, $d=\gcd(L,Q)$, and define
\[
\mathcal E_{L,Q}=\{(p,q)\in\mathbb Z_{\ge0}^2:Qp+Lq=LQ\},\qquad
\mathcal C_d=\{(u,v)\in\mathbb Z_{\ge0}^2:u+v=d\}.
\]
Then
\begin{equation}\label{eq:isomorphism}
\Phi(p,q)=\left(\frac dL p,\frac dQ q\right)
\end{equation}
is a bijection from $\mathcal E_{L,Q}$ to $\mathcal C_d$. It preserves and reflects coordinatewise order. Both sets have $d+1$ elements, or $d-1$ when both coordinates are required to be positive. In normalized coordinates, the recovered 2019 relation is exactly a two-coordinate constant-sum antichain.
\end{theorem}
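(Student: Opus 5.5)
The plan is to parametrize $\mathcal E_{L,Q}$ explicitly by elementary divisibility and then read off every claim from that parametrization. Write $L=dL'$ and $Q=dQ'$ with $\gcd(L',Q')=1$. Dividing $Qp+Lq=LQ$ by $d$ gives $Q'p+L'q=L'Q'd$.

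Reducing modulo $L'$ gives $L'\mid Q'p$, so $L'\mid p$ by coprimality. Symmetrically, $Q'\mid q$. Hence $p=L'u$ and $q=Q'v$ with $u,v\in\mathbb Z_{\ge0}$. Substituting and dividing by $L'Q'$ yields $u+v=d$. Since $d/L=1/L'$ and $d/Q=1/Q'$, we get $(u,v)=\Phi(p,q)$. This shows $\Phi$ is well defined with integer values and maps $\mathcal E_{L,Q}$ into $\mathcal C_d$.

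For bijectivity, I will exhibit the inverse $\Psi(u,v)=(L'u,Q'v)$. For $u+v=d$ one checks $Q(L'u)+L(Q'v)=dL'Q'(u+v)=LQ$, so $\Psi$ lands in $\mathcal E_{L,Q}$. Both composites $\Phi\circ\Psi$ and $\Psi\circ\Phi$ are visibly the identity.

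Order preservation and reflection are immediate. $\Phi$ scales each coordinate by a positive constant, so $p\le p'\iff u\le u'$ and $q\le q'\iff v\le v'$.

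For the cardinalities, $\mathcal C_d=\{(u,d-u):0\le u\le d\}$ has $d+1$ elements. Positivity of both coordinates means $1\le u\le d-1$, giving $d-1$ elements. Because $\Phi$ scales by positive factors, it preserves positivity of each coordinate, so the same counts transfer to $\mathcal E_{L,Q}$.

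For the antichain claim, suppose $(u,v)\preceq(u',v')$ in $\mathcal C_d$. Then $u\le u'$ and $v\le v'$, while $u+v=u'+v'$, which forces equality. So $\mathcal C_d$ is an antichain. By order reflection, $\mathcal E_{L,Q}$ is an antichain as well. Finally, \eqref{eq:historical-weighted} is exactly the defining equation of $\mathcal E_{L,Q}$, with $Q=L-h_0$ and $p,q$ restricted to integer admissible values. So in normalized coordinates the recovered 2019 relation is the constant-sum set $\mathcal C_d$.

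The only step with any content is the divisibility argument that $L'\mid p$ and $Q'\mid q$. This is what guarantees that integer solutions exist only at multiples of $L'$, and hence that exactly $d+1$ of them exist. Everything else is direct verification.
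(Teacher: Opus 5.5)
Your proof is correct and follows essentially the same route as the paper's. You factor out $d=\gcd(L,Q)$ and use coprimality to force $L'\mid p$; the paper solves directly for $q=Q'(d-u)$ where you derive $Q'\mid q$ symmetrically, which is equivalent. From there you read off the inverse map, order preservation and reflection under positive scaling, the counts $d+1$ and $d-1$, and the constant-sum antichain property exactly as the paper does.
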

\begin{proof}
Write $L=d\ell$ and $Q=dr$ with $\gcd(\ell,r)=1$. The defining equation becomes $rp+\ell q=d\ell r$. Reduction modulo $\ell$ gives $\ell\mid p$, so $p=\ell a$ and $q=r(d-a)$ for $a=0,\ldots,d$. Applying $\Phi$ gives $(a,d-a)$, and every such pair has the inverse image $(\ell a,r(d-a))$. Positive coordinate scalings preserve and reflect each coordinate inequality, both on this set and on the corresponding rectangular sublattice. If two distinct elements of $\mathcal C_d$ were comparable, their nonnegative coordinate differences would sum to zero, a contradiction. The endpoint and interior counts follow from the allowed values of $a$.
\end{proof}

When $Q=L$, the normalization is the identity and the relation is $p+q=L$. This establishes a precise temporal and mathematical connection. Decreasing coupling and sequential history occur in the 2019 development; the later review clarifies the opening convention; the present formulation identifies the associated integer lattice and its constant-sum representation. Bos and Chaum's fixed-cardinality incomparable encodings predate that development~\cite{BosChaum1993}. Constant-sum Winternitz encodings were studied by Zhang, Cui, and Yu in 2023~\cite{Zhang2023}. The 2026 Antichain Winternitz (ACW) preprint describes constant-sum chain witnesses with label-recovery functionality for Bitcoin~\cite{Tiwari2026}. Our characterization of ACW is limited to its official abstract and metadata. A pairwise antichain is a local mechanism shared across these settings; equality of complete encodings, protocol equivalence, bibliographic influence, and transfer of security reductions do not follow from the normalization.

Lamport's one-time signature uses one-way functions to authenticate a message~\cite{Lamport1979}; Merkle's certification construction authenticates many signing instances under a compact commitment~\cite{Merkle1990}. WOTS+ develops Winternitz chain encodings with a security analysis~\cite{Hulsing2013}. Its message digits are accompanied by checksum digits, which oppose forward traversal to an alternative message~\cite{RFC8391}. RecHash instead pairs each block with a complementary position. Both designs constrain forward substitution, but their encodings and permitted reuse differ.

XMSS and XMSS$^{\rm MT}$ combine WOTS+ with tree authentication~\cite{RFC8391}; LMS combines LM-OTS with a Merkle tree, and HSS certifies a hierarchy of LMS keys~\cite{RFC8554}. NIST SP 800-208 approves specified parameter sets and prescribes stateful implementation requirements~\cite{NIST208}. Their signer state prevents reuse of one-time material. RecHash also changes the public verification state and requires authoritative chronology. An XMSS or HSS signature contains its tree authentication data; an offline verifier need not obtain an ordered log of earlier application signatures. The stateful guidance is relevant to exhaustion and rollback, but does not approve RecHash or establish its history assumption.

The SPHINCS+ framework combines FORS with a hypertree and develops a tweakable-hash abstraction for security analysis~\cite{Bernstein2019}. NIST FIPS 205, published in August 2024, standardizes the derived SLH-DSA construction with twelve SHA-2 or SHAKE parameter sets~\cite{FIPS205}. It uses FORS, WOTS+, and XMSS components while retaining a fixed public key and a stateless signing interface. RecHash's domain-separated chain functions serve a related contextual purpose, but the assumptions and reductions of SPHINCS+ or SLH-DSA do not automatically apply to them.

\begin{table}[htbp]\centering\small
\begin{tabular}{@{}p{2.35cm}p{4.25cm}p{7.1cm}@{}}\toprule
Construction & Local mechanism & Authentication across signing instances\\\midrule
WOTS+~\cite{Hulsing2013} & Chain positions and checksum & One-time component; requires an additional many-time construction.\\
XMSS / XMSS$^{\rm MT}$~\cite{RFC8391} & WOTS+ & Merkle authentication under a root; signer tracks used instances.\\
LMS / HSS~\cite{RFC8554} & LM-OTS & Tree or hierarchy of certified trees; finite signing capacity.\\
SLH-DSA~\cite{FIPS205} & FORS and WOTS+ & Stateless architecture with hypertree authentication and a fixed public key.\\
Constant-sum WOTS~\cite{Zhang2023} & Incomparable digit families & Encoding-level analysis within specified OTS frameworks.\\
RecHash & Pairwise antichains & Moving heads, ordered acceptance, and reserved renewal capacity.\\\bottomrule
\end{tabular}
\caption{Structural comparison across different abstraction levels. The table does not rank performance at equal security.}\label{tab:related}
\end{table}

The literature cutoff is October 28, 2026. The April 2026 initial public draft of NIST SP 800-230 proposes six additional SLH-DSA parameter sets for limited-use applications, with at most $2^{24}$ signatures per key~\cite{NIST230}; it remains a draft at that cutoff. It illustrates the relevance of a prescribed signing horizon, although its usage bound is not a count of physical chain segments. RFC 9708 specifies HSS/LMS use in CMS, and RFC 9909 specifies SLH-DSA conventions for X.509~\cite{RFC9708,RFC9909}. RecHash does not provide either interoperability profile.

Hash-based signatures are one branch of post-quantum authentication. FIPS 204 standardizes the module-lattice-based ML-DSA~\cite{FIPS204}, while NIST IR 8610 reports nine additional-signature candidates advancing to a third round in May 2026~\cite{NIST8610}. Those candidates are under evaluation rather than approved standards. The comparisons here focus on hash-based mechanisms and state management, not an exhaustive survey or cross-family benchmark.

A post-quantum analysis must specify adversarial access to the hash functions. Grover's quadratic search improvement does not assign a security level to an entire signature construction~\cite{Grover1996}; classical and quantum-accessible random-oracle models must be distinguished~\cite{Boneh2011}. Perlner, Kelsey, and Cooper's attack on particular pre-standardization SHA-256 SPHINCS+ parameters illustrates the importance of multi-target behavior~\cite{Perlner2022}; it is not presented here as a break of FIPS 205. For RecHash, adaptive exposures, digest binding, and the number of targets remain obligations of a profile-specific one-step reduction. The contribution developed below is the scheduling of the local antichains, its exact capacity, and its composition with sequential authenticated state.
\section{Model, notation, and cryptographic interfaces}\label{sec:model}

For each component $i\in\{1,\ldots,k\}$, family $a\in\{1,2\}$, and epoch $e$, define
\begin{equation}\label{eq:chains}
C_{a,i,e}[j]=f_{a,i,e}^{\,j}(s_{a,i,e}),\qquad 0\le j\le L_{a,i,e}.
\end{equation}
Seeds are independently sampled $\lambda$-bit strings. SEGMENTED uses equal lengths within each pair; DIRECT also permits unequal pair lengths. Indices are absolute positions from the seed within an epoch; indices in different epochs do not refer to the same physical chain.

The digest length is $\ell$ and the chain-output length is $\lambda$; both are 256 bits in the SHA-256 examples. SEGMENTED uses positive block sizes $b_i$ with $\sum_i b_i=\ell$. Its reference example has $k=32$ pairs and $b_i=8$. DIRECT uses uniform blocks, with $k=16$ and $b=16$ as its reference case. In both profiles $k$ counts pairs, so there are $2k$ physical chains and $2k$ openings per record. Small traces below explicitly use illustrative digests.

Let $\hash=\operatorname{SHA256}$ in the concrete examples and define
\[
f_{a,i,e}(x)=\hash\bigl(\Enc(\texttt{CHAIN},\mathsf{id},a,i,e,x)\bigr).
\]
The identity and registered parameters fix the profile, block allocation, maximum physical length $L_{\max}$, and payload limit $M_{\max}$. DIRECT additionally registers the positive nonce budget $N_{\max}$; SEGMENTED registers its renewal policy and uses nonce zero. These parameters remain fixed within a registration. Domain separation is specified but is not claimed to prove independence of concrete functions. Appendix~\ref{app:encoding} defines $\Enc$ without ambiguous concatenation.

The public state after round $n$ contains the immutable parameters, identity, sequence number, history commitment $T_n$, and, for every component,
\[
(e_{1,i},L_{1,i,e},Q_{n,i},A_{n,i}),\qquad
(e_{2,i},L_{2,i,e},P_{n,i},B_{n,i}),
\]
where $A_{n,i}=C_{1,i,e}[Q_{n,i}]$ and $B_{n,i}=C_{2,i,e}[P_{n,i}]$. Epoch subscripts are omitted when unambiguous. SEGMENTED also stores the number $r$ of consumed slots in the current phase; its next slot is $j=r+1$. A phase ends when all chains are renewed together. The global record count $n$ does not reset. Initially all epochs are zero and the active indices equal the respective chain lengths. Alice constructs a registration object $G$ containing parameters, identity, lengths, epochs, and heads, and sets
\[
T_0=\hash(\Enc(\texttt{GENESIS},G)).
\]
The pair $(G,T_0)$ is authenticated externally. As with an ordinary signature public key, an initial trust anchor must be established externally; RecHash subsequently authenticates changes to its verification state. The signer retains the seeds and any chain traversal cache separately from the public state.

\begin{definition}[Hash-chain opening]\label{def:opening}
For an authenticated head $H=C[t]$, an opening is the disclosed hash value
\begin{equation}\label{eq:opening}
\omega=C[z]=f^z(s),\qquad 0\le z<t,\qquad f^{t-z}(\omega)=H.
\end{equation}
The integer $z$ is its opening index, $t-z$ its verification distance, and $H$ its anchor. The opening is a value, not its index or a complete signature. RecHash uses strictly interior indices for reusable chains.
\end{definition}
For a candidate record, let
\[
\omega^{(1)}_i=C_{1,i,e}[q_i],\qquad
\omega^{(2)}_i=C_{2,i,e}[p_i],\qquad
\Omega_i=(\omega^{(1)}_i,\omega^{(2)}_i).
\]
The vector $\boldsymbol\Omega=(\Omega_1,\ldots,\Omega_k)$ supplies the $2k$ disclosed values. Acceptance of an ordinary record makes these values the next heads. A renewal record instead uses them to authorize installation of the replacement heads.

Let $X(S)=\hash(\Enc(\texttt{STATE},S))$, including the complete public state and parameters. If the prior state's counter is $n$, define
\begin{equation}\label{eq:digest}
D(S,\tau,m)=\hash\bigl(\Enc(\texttt{DIGEST},X(S),n+1,\tau,m)\bigr),
\end{equation}
where $\tau\in\{\texttt{SIGN},\texttt{RENEW}\}$ and $m$ is a byte string. Divide the digest, most-significant bit first, into the registered blocks $u_i$ of $b_i$ bits (uniform $b$ for DIRECT). The type is authenticated: the word RENEW inside an ordinary SIGN payload does not change the operation type. Components are not permuted, and ordinary messages are not modified to force admissibility.

The symbolic forward-closure model permits deriving disclosed-chain descendants by hash evaluation. Computational adversaries may also seek preimages, collisions, alternative digest encodings, and state inconsistencies. They observe all public records and choose messages adaptively. Our algebraic results concern the index model. The security composition theorem separately assumes a uniform one-step computational guarantee for reachable states and accumulated adversarial views. Neither local incomparability nor the absence of number-theoretic assumptions establishes that guarantee on its own.

\begin{definition}[Authenticated-history service, H]\label{def:H}
For each identity, the verifier has a unique, ordered, authenticated prefix of accepted records and its current state. A candidate extension must reference exactly that state and pass signature verification. Once accepted, a record cannot be replaced, reordered, or deleted without detection. Rollbacks and retrospective forks are rejected. Historical auditing verifies membership in this prefix through the service or an equivalent authenticated checkpoint.
\end{definition}
H states the required function of Singlechain or an equivalent mechanism. Equation~\eqref{eq:history} supplies a hash link but does not by itself implement the assumption. An authenticated anchor and a rule selecting the authoritative prefix are necessary.

H does not assume that every new candidate signature is legitimate. An attacker may still try to make a newly submitted record pass the cryptographic checks; that is the impersonation event analyzed below. The service preserves the accepted prefix, while the signature mechanism must justify the authenticity of new extensions. This separation avoids assuming the desired security conclusion.

\begin{table}[htbp]\centering\small
\begin{tabular}{@{}ll@{}}\toprule
Symbol & Meaning\\\midrule
$k$, $2k$ & Digest components (chain pairs), physical chains\\
$\ell$, $\lambda$ & Digest bits, bits in each chain value\\
$b_i$, $u_i$ & Block width, its integer value\\
$L_i$, $P_i,Q_i$ & Pair length, current head indices\\
$p_{j,i},q_{j,i}$ & Candidate opening indices for slot $j$\\
$s_i$, $J$ & Steps per segment, available slots in a phase\\
$n$, $r$, $e$ & Global record count, consumed slots, chain epoch\\
$\boldsymbol\Omega$ & Vector of disclosed hash values\\\bottomrule
\end{tabular}
\caption{Main notation for SEGMENTED. Indices and traversal distances are integers; openings and heads are hash values.}\label{tab:notation}
\end{table}
\section{Segmented allocation}\label{sec:segments}

In the SEGMENTED profile, $k$ counts digest components and chain pairs, so $2k$ physical chains are active. Unlike DIRECT, this profile places a translated constant-sum code inside each scheduled interval. It retains local incomparability but does not satisfy the proportional recurrence with the previous openings as endpoints. Section~\ref{sec:equivalence} analyzes that proportional alternative separately.

Let the digest length be $\ell$ bits and choose positive block sizes $b_1,\ldots,b_k$ with $\sum_i b_i=\ell$. Both chains of pair $i$ have the same registered epoch length $L_i$. Define
\begin{equation}\label{eq:segment-size}
M_i=2^{b_i},\qquad s_i=M_i+1,\qquad
J=\min_i\left\lfloor\frac{L_i}{s_i}\right\rfloor.
\end{equation}
Here $s_i$ is a number of hash steps, not a number of interior points. A segment of $s_i$ steps has exactly $M_i$ strictly interior integer positions. In epoch-local slot $j\in\{1,\ldots,J\}$, set
\begin{equation}\label{eq:segment-encoding}
\beta_{j,i}=L_i-js_i,\qquad
p_{j,i}=\beta_{j,i}+1+u_i,\qquad
q_{j,i}=\beta_{j,i}+M_i-u_i,
\quad 0\le u_i<M_i.
\end{equation}
The signing opening is at $p_{j,i}$ in chain 2 and the verifier opening is at $q_{j,i}$ in chain 1. The boundaries are numerical coordinates; their chain values are not separately published in advance. In particular, a lower boundary must not be disclosed as a segment commitment, since that disclosure would reveal all its descendants.

The local coordinates $x=p-\beta$ and $y=q-\beta$ satisfy $x+y=s_i$, with $x,y\in\{1,\ldots,M_i\}$. Equivalently, $y=s_i(1-x/s_i)$. Thus the same decreasing geometry is retained in local coordinates, whereas the global indices obey
\begin{equation}\label{eq:translated-sum}
p_{j,i}+q_{j,i}=2\beta_{j,i}+s_i.
\end{equation}
The translation and its slot number are determined by the authenticated state, not chosen by the signature submitter.

\begin{theorem}[Segment capacity and separation]\label{thm:segment-capacity}
For fixed registered lengths and block sizes, every $\ell$-bit digest has exactly one opening-index vector in each slot $1\le j\le J$. Each slot's index vectors form an antichain. Every possible opening index in slot $j+1$ is strictly below every possible index in slot $j$ in the same physical chain. Honest verification to the previously accepted heads is well defined, and all $J$ slots are available independently of the messages signed in earlier slots.
\end{theorem}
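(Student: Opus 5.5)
The plan is to verify the four claims directly from \eqref{eq:segment-size} and \eqref{eq:segment-encoding}, working component by component and then taking products over $i$.

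\emph{Existence and uniqueness.} The digest, read most-significant bit first, splits uniquely into blocks $u_i\in\{0,\ldots,M_i-1\}$. For fixed $j$, \eqref{eq:segment-encoding} defines $(p_{j,i},q_{j,i})$ as a function of $u_i$. I will then check that the indices are admissible. Since $j\le J\le\lfloor L_i/s_i\rfloor$, we have $\beta_{j,i}=L_i-js_i\ge 0$. Hence
\[
\beta_{j,i}+1\le p_{j,i},\,q_{j,i}\le\beta_{j,i}+M_i=\beta_{j,i}+s_i-1.
\]
So both indices are strictly interior to the segment $(\beta_{j,i},\beta_{j,i}+s_i)$, and in particular they lie in $[0,L_i)$. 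The map $u_i\mapsto p_{j,i}$ is injective, so the index vector determines the digest. This gives exactly one vector per digest in each slot, with no dependence on earlier messages.

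\emph{Antichain.} In local coordinates $x=u_i+1$ and $y=M_i-u_i$, so $x+y=s_i$. Translating by $\beta_{j,i}$ preserves order. By Theorem~\ref{thm:equivalence} with $Q=L=s_i$ (where the normalization is the identity), the pairs form a constant-sum antichain in each component. For index vectors in slot $j$, take two distinct digests. They differ in some component $i$, and there the pairs $(p_{j,i},q_{j,i})$ are incomparable. One pair increases $p$ and the other increases $q$, so neither full vector dominates the other. If ``antichain'' is meant only componentwise, this step is immediate.

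\emph{Separation and verification.} Every index in slot $j+1$ is at most $\beta_{j+1,i}+M_i=\beta_{j,i}-1$. Every index in slot $j$ is at least $\beta_{j,i}+1$. So slot-$(j+1)$ indices are strictly below slot-$j$ indices in both chains. For verification:
\begin{itemize}
\item In slot $1$ the previous heads sit at index $L_i$. Since $p_{1,i},q_{1,i}\le L_i-1$, the distances $L_i-p_{1,i}$ and $L_i-q_{1,i}$ are positive.
\item For $j\ge2$ the accepted heads are the slot-$(j-1)$ openings, at $P_{n,i}=p_{j-1,i}$ and $Q_{n,i}=q_{j-1,i}$. By separation, $P_{n,i}-p_{j,i}\ge 2$ and $Q_{n,i}-q_{j,i}\ge2$.
\end{itemize}
Hence $f^{P-p}(\omega^{(2)}_i)=B_{n,i}$ and $f^{Q-q}(\omega^{(1)}_i)=A_{n,i}$ hold for honest openings, by \eqref{eq:chains} and Definition~\ref{def:opening}. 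The verifier can compute both distances because $P,Q$ come from the authenticated state and $j=r+1$ is fixed by that state.

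The main obstacle is interpretive rather than computational. Verification distances depend on the previous digest, through $P_{n,i}$, so they vary from record to record. I must argue that they are nonetheless always positive and well defined, which is exactly what separation guarantees. I must also argue that "available independently" means $\beta_{j,i}\ge0$ for every $j\le J$ regardless of the earlier $u$'s, which holds because $\beta_{j,i}$ does not depend on any digest.
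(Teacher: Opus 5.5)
Your proof is correct and follows the paper's argument essentially step for step. It uses injectivity of $u_i\mapsto(1+u_i,M_i-u_i)$ with fixed sum $s_i$ for uniqueness and incomparability, $\beta_{j,i}\ge0$ for positivity, $\beta_{j+1,i}+M_i=\beta_{j,i}-1<\beta_{j,i}+1$ for separation, and positive verification exponents because the accepted slot-$(j-1)$ openings become the heads. The only differences are cosmetic: you invoke Theorem~\ref{thm:equivalence} with $Q=L=s_i$ where the paper applies the equal-sum argument directly, and you leave implicit the paper's induction over accepted heads.
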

\begin{proof}
In each component, $u_i\mapsto(1+u_i,M_i-u_i)$ is injective and takes all $M_i$ inputs to interior points with fixed sum $s_i$. If two such pairs are coordinatewise comparable, their equal sums force equality in both coordinates. Applying this argument to each pair proves incomparability of distinct digest vectors. Since $j\le J$, $\beta_{j,i}\ge0$, so both indices are positive. The largest index in slot $j+1$ is $\beta_{j,i}-1$, whereas the smallest in slot $j$ is $\beta_{j,i}+1$. For the first slot the largest index is $L_i-1$, below the initial head. Hence all verification exponents to the previous heads are positive. Each accepted opening becomes the next head, and the same interval separation holds for every subsequent digest. Induction proves availability and correct traversal throughout the scheduled horizon.
\end{proof}
The slot count is exact for this specified fixed-width, strictly interior construction. It is not a universal optimum among arrangements that reuse boundaries, change encodings, or redistribute coordinates. It also does not prove computational unforgeability: equal digests and attacks on the hash functions remain outside the index argument.

\begin{theorem}[Optimal balanced allocation for equal lengths]\label{thm:balanced}
Fix $\ell$, $1\le k\le\ell$, and a common length $L$. Among positive integer block-size allocations summing to $\ell$, a balanced allocation maximizes the number of scheduled slots. Writing $\ell=ak+r$, $0\le r<k$, choose $r$ blocks of $a+1$ bits and $k-r$ blocks of $a$ bits. The resulting maximum is
\begin{equation}\label{eq:balanced}
\boxed{J(L,k,\ell)=\left\lfloor\frac{L}{2^{\lceil\ell/k\rceil}+1}\right\rfloor.}
\end{equation}
Other allocations may tie because of the floor operation.
\end{theorem}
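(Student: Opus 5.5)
The plan is to reduce the optimization to minimizing the largest block size. With a common length $L_i=L$, definition \eqref{eq:segment-size} gives
\[
J=\min_i\left\lfloor\frac{L}{2^{b_i}+1}\right\rfloor .
\]
The map $b\mapsto\lfloor L/(2^b+1)\rfloor$ is nonincreasing in $b$, because $2^b+1$ increases strictly. Hence the minimum over $i$ is attained at $b^\ast=\max_i b_i$, and
\[
J=\left\lfloor\frac{L}{2^{b^\ast}+1}\right\rfloor .
\]
I will state this monotonicity step explicitly. It is also where ties arise: two different values of $b^\ast$ can give the same floor. So the claim should be read as ``a balanced allocation attains the maximum,'' not ``only balanced allocations do,'' which matches the theorem's final sentence.

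Next comes a pigeonhole bound. For any positive integer allocation with $\sum_i b_i=\ell$ over $k$ blocks, $k\max_i b_i\ge\ell$. Since the maximum is an integer, $b^\ast\ge\lceil\ell/k\rceil$. By the monotonicity above, every allocation therefore satisfies
\[
J\le\left\lfloor\frac{L}{2^{\lceil\ell/k\rceil}+1}\right\rfloor .
\]

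It remains to show the balanced allocation attains this bound. Write $\ell=ak+r$ with $0\le r<k$. The hypothesis $k\le\ell$ gives $a\ge1$, so every block is positive. The block sizes sum to $r(a+1)+(k-r)a=\ell$. Their maximum is $a+1$ when $r>0$ and $a$ when $r=0$, which in both cases equals $\lceil\ell/k\rceil$. Substituting this value gives \eqref{eq:balanced}.

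No step is a real obstacle. The only care needed is the integer bookkeeping: confirm that $\lceil\ell/k\rceil$ is exactly the balanced maximum in both the $r=0$ and $r>0$ cases, and that positivity of the blocks uses $k\le\ell$. Optionally, I will remark that Theorem~\ref{thm:segment-capacity} makes each of these $J$ slots usable, so the count is an operational capacity and not merely a formal quantity.
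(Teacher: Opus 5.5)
Your proposal is correct and follows the paper's proof essentially step for step. Monotonicity of $b\mapsto\lfloor L/(2^b+1)\rfloor$ reduces $J$ to the largest block, the sum constraint forces $\max_i b_i\ge\lceil\ell/k\rceil$, and the balanced allocation attains this value, with your explicit checks that $a\ge1$ (from $k\le\ell$) and that the balanced maximum equals $\lceil\ell/k\rceil$ in both cases $r=0$ and $r>0$ filling in bookkeeping the paper leaves implicit.
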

\begin{proof}
For equal lengths, the minimum in~\eqref{eq:segment-size} is determined by $\max_i b_i$, since $\lfloor L/(2^b+1)\rfloor$ is nonincreasing in $b$. Every allocation has maximum at least $\lceil\ell/k\rceil$, by the sum constraint. The stated balanced allocation attains that value.
\end{proof}
When $k\mid\ell$, the formula reduces to $J=\lfloor L/(2^{\ell/k}+1)\rfloor$. Increasing $k$ at fixed $L$ does not decrease capacity, but the improvement is stepwise and eventually saturates: positive blocks imply $k\le\ell$, giving at most $\lfloor L/3\rfloor$ slots in this family. A plateau in $\lceil\ell/k\rceil$ gives no additional slots despite extra chains.

For a target of $R$ slots, the minimal common length in a balanced allocation is
\begin{equation}\label{eq:target-length}
L_{\min}=R\bigl(2^{\lceil\ell/k\rceil}+1\bigr).
\end{equation}
Conversely, let $h=\lfloor L/R\rfloor$. If $h<3$, no positive-bit allocation can provide $R$ slots. Otherwise put $b_{\max}=\lfloor\log_2(h-1)\rfloor$. The minimum component count is $\lceil\ell/b_{\max}\rceil$, allowing positive blocks of at most $b_{\max}$ bits. These are design formulas within the specified family.

If lengths may differ between pairs, set $L_i=R(2^{b_i}+1)$. The total chain-generation count is then $2R\sum_i(2^{b_i}+1)$. A balanced allocation minimizes this sum for fixed $k$: whenever $a\ge b+2$, replacing the sizes $(a,b)$ by $(a-1,b+1)$ decreases $2^a+2^b$ by $2^{a-1}-2^b>0$. Repeated balancing proves the claim. This optimization concerns hash counts and does not assign equal economic costs to memory, bandwidth, and hashing.

For a small trace, consider one pair, a two-bit illustrative digest, $L=15$, and $s=5$. There are three slots, with boundaries $(10,15)$, $(5,10)$, and $(0,5)$, used in that order. Suppose the two application records and the renewal payload produce block values $1,3,2$. These are illustrative digest outputs, not a rule allowing the signer to replace the digest of a chosen message. Table~\ref{tab:trace} shows every transition.

\begin{table}[htbp]\centering\small
\begin{tabular}{@{}clrrrrr@{}}\toprule
Slot & Record & $u$ & $\beta$ & $(q,p)$ & Prior $(Q,P)$ & Distances $(Q-q,P-p)$\\\midrule
1 & Application & 1 & 10 & $(13,12)$ & $(15,15)$ & $(2,3)$\\
2 & Application & 3 & 5 & $(6,9)$ & $(13,12)$ & $(7,3)$\\
3 & Renewal & 2 & 0 & $(2,3)$ & $(6,9)$ & $(4,6)$\\\bottomrule
\end{tabular}
\caption{Illustrative one-pair trace under \texttt{FULL-RESERVED}. Each disclosed value is $C_1[q]$ or $C_2[p]$; the integers themselves are recomputed by the verifier. The two-bit digest is pedagogical, not a security parameter recommendation.}\label{tab:trace}
\end{table}
The first record discloses $(C_1[13],C_2[12])$, verified by two and three evaluations to the registered heads. After its confirmed acceptance, the second discloses $(C_1[6],C_2[9])$, verified by seven and three evaluations to the first record's heads. Finally, the renewal record discloses $(C_1[2],C_2[3])$ under the old chains. If the replacement chains also have length 15, acceptance installs their epoch-one heads, resets $r$ to zero, and sets the global counter to three. The next ordinary record uses slot one of the new phase with global record number four.

Within the first slot, replacing $u=1$ by $u=2$ changes $(q,p)=(13,12)$ to $(12,13)$. Forward traversal can supply $C_2[13]$ from $C_2[12]$, but not the earlier indexed value $C_1[12]$ in the symbolic chain model. In contrast, both coordinates of the legitimate second-slot opening lie before those of the first. Figure~\ref{fig:mechanism} separates these two uses of the index geometry.

\begin{figure}[htbp]\centering
\includegraphics[width=0.96\textwidth]{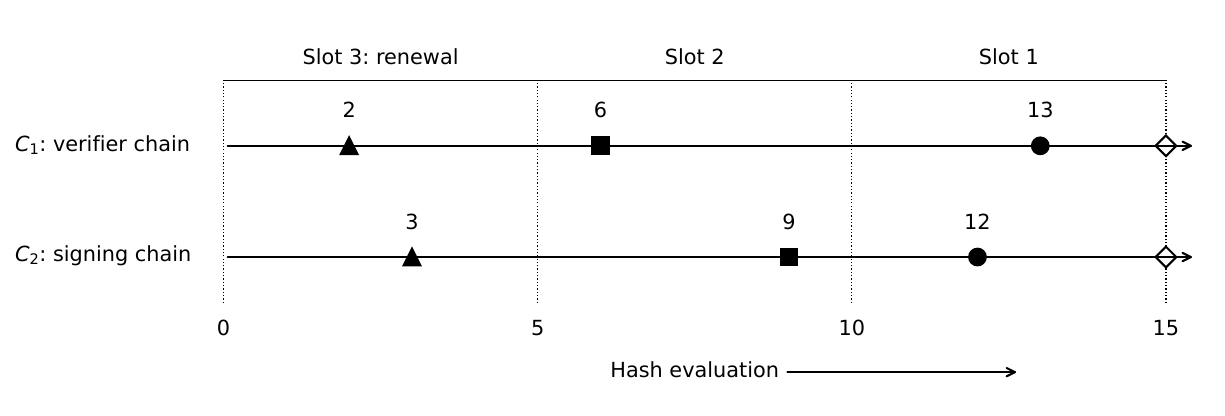}
\caption{The trace of Table~\ref{tab:trace}. Hash evaluation proceeds toward larger indices; disclosure proceeds through slots 1, 2, and 3 toward smaller indices. Circles mark the first record, squares the second, and triangles the old-chain openings of the renewal certificate. Open diamonds mark the registered heads. Segment boundaries are index coordinates, not published chain commitments.}\label{fig:mechanism}
\end{figure}
\section{Sequential operation and authenticated renewal}\label{sec:algorithms}\label{sec:renewal}\label{sec:policy}\label{sec:segment-state}
The following specification is for SEGMENTED. Its registered policy is either \texttt{FULL-RESERVED}, which reserves the last slot for renewal, or \texttt{TERMINAL}, which permits all slots to carry ordinary records and then stops. Early renewal is rejected under the former policy; the certificate occupies exactly slot $J$. This fixes the phase schedule used by the resource analysis.

Both profiles use the canonical signed record
\begin{equation}\label{eq:record}
R=(\tau,\mathsf{id},n+1,X(S),T,m,\boldsymbol\Omega),
\end{equation}
where $S$ is the authoritative predecessor and $T$ its history commitment. Indices are recomputed from the digest rather than transmitted. The openings are only one part of the record. After construction of $R$, define
\begin{equation}\label{eq:history}
T'=\hash\bigl(\Enc(\texttt{HIST},T,R)\bigr).
\end{equation}
This commitment is not included in its own input record. A new phase uses independently sampled seeds and incremented chain epochs; its local slot count resets, while the global record counter continues.

A renewal payload is $m=\Enc(W,0)$, with one entry for every chain, sorted by family and component. Each entry is
\begin{equation}\label{eq:replacement}
w=(a,i,e,e+1,H_{\rm old},L',H').
\end{equation}
Here $H_{\rm old}$ is the current head in $S$, and $H'=f_{a,i,e+1}^{L'}(s')$ is generated from a fresh secret seed. New lengths are positive, bounded by $L_{\max}$, equal within each pair, and provide $J'\ge2$. The nonce field is zero because every digest fits the designated slot. All $2k$ heads are replaced together. The certificate authorizes the advertised heads; it does not publicly prove knowledge of their seeds.

Registration and each subsequent operation follow these rules:
\begin{enumerate}
\item \emph{Register.} Choose the identity, profile, policy, block sizes, resource limits, and pair lengths. Require $J\ge2$ for \texttt{FULL-RESERVED} or $J\ge1$ for \texttt{TERMINAL}. Generate the $2k$ chains, authenticate $(G,T_0)$, and initialize the global counter $n=0$ and consumed-slot count $r=0$. For continuing operation, the payload limit must also accommodate the intended full replacement certificates. The signer durably stores its secret seeds or traversal data and a copy of the public state.
\item \emph{Prepare.} Obtain exclusive access to the signing state and confirm that it agrees with H. If a previously emitted record is pending, retransmit that exact record and emit no successor. Otherwise use slot $j=r+1$. Ordinary records are allowed for $j<J$ under \texttt{FULL-RESERVED}, and for $j\le J$ under \texttt{TERMINAL}. Slot $J$ under \texttt{FULL-RESERVED} requires the full renewal payload above. Reject incompatible types, exhausted phases, and oversized payloads before disclosure.
\item \emph{Open.} Compute $D(S,\tau,m)$, extract the registered blocks, and obtain $p_{j,i},q_{j,i}$ from~\eqref{eq:segment-encoding}. Set $\omega_i^{(1)}=C_{1,i,e}[q_{j,i}]$ and $\omega_i^{(2)}=C_{2,i,e}[p_{j,i}]$, and construct $R$. Run the checker below locally. Atomically persist $R$, its candidate successor, and all required secrets before publishing. The predecessor is now consumed for signing even while acceptance is pending.
\item \emph{Check.} The pure function $\mathsf{CheckStep}(S,R)$ rejects noncanonical encodings, incorrect identity, type, counter, or predecessor references, invalid slot/policy combinations, payloads exceeding $M_{\max}$, and opening vectors other than $k$ pairs of $\lambda$-bit strings. For renewal, it also checks exactly one entry per chain, ordering, old heads and epochs, incremented epochs, head sizes, length bounds, equal pair lengths, zero nonce, and $J'\ge2$. It recomputes the digest and indices and requires, for every component,
\[
f_{1,i,e}^{Q_i-q_{j,i}}(\omega_i^{(1)})=A_i,\qquad
f_{2,i,e}^{P_i-p_{j,i}}(\omega_i^{(2)})=B_i.
\]
On an ordinary record, it returns the state with the new indices and heads, $n+1$, $r+1$, and $T'$. On renewal, it first verifies these old-chain openings, then installs every new epoch, length, and head, sets the active indices to the new lengths, resets $r=0$, and retains $n+1$ and $T'$. Immutable registration parameters are unchanged. The checker itself makes no persistent write.
\item \emph{Accept and synchronize.} A verifier commits $(R,S')$ only if checking succeeds and its authoritative predecessor is still $S$. An identical accepted record is a duplicate, not another transition. H supplies ordered acceptance and confirmation. The signer confirms membership of the pending record, adopts the accepted successor as its synchronized state, and only then emits another record. A crash or missing confirmation permits exact retransmission, not rollback or a different record under the consumed predecessor.
\end{enumerate}
The public authoritative state and the signer's pending candidate are distinct until acceptance. Other verifiers synchronize through H rather than treating an unconfirmed publication as a new anchor. A joining auditor obtains the authenticated registration, the relevant record sequence, and an authenticated checkpoint, replays the transitions, and checks agreement. A trusted checkpoint that includes the current state can support current verification, while historical claims still require membership evidence. These interfaces specify what H must provide without choosing a consensus implementation.

\begin{proposition}[Correctness over a segmented phase]\label{prop:correctness}
For honestly generated chains and valid registered parameters, every policy-allowed payload within the size limit produces a record accepted by $\mathsf{CheckStep}$ on the same predecessor state. An ordinary record adopts the disclosed openings as heads. A last-slot renewal installs exactly the advertised full replacement and resets only the phase-local counter.
\end{proposition}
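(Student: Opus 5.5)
The argument traces $\mathsf{CheckStep}$ clause by clause on an honestly produced record $R$ and shows that each rejection condition fails. The proof runs by induction over the slots of a phase. The invariant is that the authoritative state $S$ before slot $j=r+1$ has $Q_i=q_{j-1,i}$ and $P_i=p_{j-1,i}$, or both equal to $L_i$ when $j=1$. It also has $A_i=C_{1,i,e}[Q_i]$, $B_i=C_{2,i,e}[P_i]$, an unchanged parameter set, and global counter $n$. At registration this holds by the definition of the genesis state.

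\textbf{Syntactic clauses.} The Prepare step admits only policy-allowed type/slot combinations: ordinary records for $j<J$ (or $j\le J$ under \texttt{TERMINAL}), and renewal exactly at $j=J$ under \texttt{FULL-RESERVED}. It also admits only payloads with $|m|\le M_{\max}$. The signer builds $R$ from $S$ itself, with identity, counter $n+1$, $X(S)$ and $T$ taken from $S$, and with canonical $\Enc$ from Appendix~\ref{app:encoding}. Hence the identity, type, counter, predecessor, slot/policy, size and encoding checks pass by construction. The opening vector has $k$ pairs of $\lambda$-bit strings because each chain value is a hash output.

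For renewal, the honest payload lists one entry per chain in the prescribed order. Each entry has $H_{\rm old}$ equal to the current head and epoch $e\mapsto e+1$, with $H'=f^{L'}_{a,i,e+1}(s')$. The new lengths satisfy $0<L'\le L_{\max}$, are equal within pairs, and give $J'\ge2$, with nonce zero. These are exactly the registration constraints the checker tests. This clause-matching is routine.

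\textbf{Opening equations.} This is the substantive step. Because the checker recomputes $D(S,\tau,m)$ from the same inputs, it obtains the same blocks $u_i$ and hence the same $p_{j,i},q_{j,i}$ via~\eqref{eq:segment-encoding}. By Theorem~\ref{thm:segment-capacity}, every index in slot $j$ lies strictly below every index in slot $j-1$ of the same chain, and below $L_i$ for $j=1$. Under the invariant, this gives $Q_i-q_{j,i}>0$ and $P_i-p_{j,i}>0$. The chain definition~\eqref{eq:chains} then gives
\[
f^{Q_i-q_{j,i}}_{1,i,e}\bigl(C_{1,i,e}[q_{j,i}]\bigr)=C_{1,i,e}[Q_i]=A_i,
\]
and similarly $f^{P_i-p_{j,i}}_{2,i,e}\bigl(C_{2,i,e}[p_{j,i}]\bigr)=B_i$. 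This is only iterate composition $f^{a}\circ f^{b}=f^{a+b}$ and needs no hash assumption. So $\mathsf{CheckStep}$ accepts.

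\textbf{State update.} For an ordinary record, the returned state sets the heads to the disclosed openings $C_{1,i,e}[q_{j,i}]$ and $C_{2,i,e}[p_{j,i}]$, the indices to $q_{j,i},p_{j,i}$, $r$ to $r+1$, and $n$ to $n+1$. This re-establishes the invariant for slot $j+1$ and closes the induction. For a renewal, the checker verifies the old-chain openings just shown and then installs each advertised $(e+1,L',H')$. It sets the active indices to $L'$, so $A_i=C_{1,i,e+1}[L']$ holds because $H'$ was honestly generated. It resets $r=0$ while keeping $n+1$, $T'$ and the immutable parameters, so only the phase-local counter is reset.

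I expect the main obstacle to be bookkeeping rather than mathematics. The invariant must track exactly which index each head sits at across pending/accept transitions. The argument assumes the accepted successor is the one the signer computed, which the Accept and Synchronize rules guarantee under exclusive state use. Once that is fixed, positivity of the verification exponents is supplied entirely by Theorem~\ref{thm:segment-capacity}.
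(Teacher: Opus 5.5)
Your proposal is correct and follows essentially the same route as the paper's proof. Both use Theorem~\ref{thm:segment-capacity} to get positive verification exponents below the current heads, chain composition for the two traversal checks, identical signer and checker indices from canonical recomputation of the digest, and the ordinary and renewal update rules. What you add is an explicit statement of the slot-by-slot head invariant $A_i=C_{1,i,e}[Q_i]$, $B_i=C_{2,i,e}[P_i]$, together with the clause-by-clause syntactic checks that the paper leaves implicit.
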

\begin{proof}
Theorem~\ref{thm:segment-capacity} assigns each digest a unique pair of interior positions below the current heads. Chain composition proves both traversal checks. Canonical inputs give the signer and checker identical indices. The ordinary update adopts those values. For renewal the same checks apply under the old domains before the validated replacement list is installed atomically. The global counter and history update follow the same record rule in both cases.
\end{proof}
With lengths 65535 and 32 pairs, the phase contains 254 ordinary records and one certificate. A terminal registration instead permits 255 ordinary records. Once a terminal phase is exhausted, no in-band renewal is possible. These are availability statements; certificate unforgeability is a separate security obligation.

\begin{proposition}[Conditional renewal continuity]\label{prop:renew-continuity}
Assume an authenticated registration, acceptance only of valid old-state renewal certificates, unique epoch progression, and atomic acceptance. Every accepted replacement head is connected to the registration by verifiable authorization records.
\end{proposition}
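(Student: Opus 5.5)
The plan is induction on the position of an accepted record in the authoritative prefix supplied by H (Definition~\ref{def:H}). Fix an identity, its externally authenticated pair $(G,T_0)$, and the ordered accepted sequence $R_1,R_2,\ldots$ with states $S_0,S_1,\ldots$. Each $S_n$ is the output of $\mathsf{CheckStep}(S_{n-1},R_n)$. The invariant I would carry is this: for every $n$ and every chain $(a,i)$, the current head of that chain in $S_n$ is connected to $G$ by a finite sequence of accepted records. Each record in the sequence either (i) discloses an opening that verifies by forward hashing to the previous head of the same chain and epoch, or (ii) is a renewal certificate in $\mathcal{R}$ (the accepted renewal records) whose entry $w$ names the previous head as $H_{\rm old}$ and the new head as $H'$. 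The base case is immediate, because the heads of $S_0$ are exactly those committed in $G$ under $T_0$.

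For the inductive step I would split on the type $\tau$ of $R_{n+1}$, which the digest authenticates. If $\tau=\texttt{SIGN}$, $\mathsf{CheckStep}$ installs as new heads exactly the openings $\omega_i^{(a)}$. It does so only after verifying $f^{Q_i-q_{j,i}}(\omega^{(1)}_i)=A_i$ and $f^{P_i-p_{j,i}}(\omega^{(2)}_i)=B_i$ with positive exponents (Theorem~\ref{thm:segment-capacity}). So the chain for each head extends by one link of type (i). If $\tau=\texttt{RENEW}$, the hypothesis that only valid old-state certificates are accepted lets me use the checker's renewal conditions. There is exactly one entry per chain, each $H_{\rm old}$ equals the current head in $S_n$, the epoch goes from $e$ to $e+1$, and the old-chain openings verify against $S_n$. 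Atomic acceptance means all $2k$ heads are replaced together by the listed $H'$. No mixed state arises in which some chains are renewed and others are not, so each new head gets a type (ii) link to a head already connected by induction.

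The remaining work, and what I expect to be the main obstacle, is making sure the connection is unambiguous: no accepted head may have two competing authorizations, and no stale certificate may be replayed. Here I would use unique epoch progression together with H's exclusion of forks and rollbacks. Each epoch $e+1$ for a given chain is introduced by exactly one accepted certificate. That certificate references the unique predecessor $X(S_n)$ and the history commitment $T$. A certificate built for a different state therefore fails the predecessor check, and one built for an earlier epoch fails the epoch check. I would state explicitly that this is a structural, conditional claim. It says that authorization records exist and can be verified, not that they cannot be forged, and certificate unforgeability is left to the composition theorem. Replaying the prefix from $(G,T_0)$ then exhibits the authorization path for every accepted replacement head, which completes the induction.
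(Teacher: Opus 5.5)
Your proof is correct and follows the paper's induction: the registration is the base, each valid certificate binds its replacement heads to the already-linked predecessor state and epoch, and atomic acceptance extends the chain. Indexing the induction by every accepted record rather than only by replacements simply makes explicit the forward-hash links through ordinary records, which the paper's phrase ``previously linked state'' leaves implicit. One side remark needs correcting: the paper's composition theorem explicitly excludes \texttt{RENEW} transactions, so certificate unforgeability is an open obligation, not something that theorem supplies.
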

\begin{proof}
The registration is the base. A valid certificate binds each replacement to the previously linked state and its epoch. Atomic acceptance extends that sequence. Induction over replacements establishes the claim.
\end{proof}
The proposition concerns authorization continuity conditioned on validity. Section~\ref{sec:security} bounds impersonation only within a renewal-free horizon. The DIRECT profile uses the same record and acceptance interfaces but a different index rule and availability policy; its precise operational differences are given in Appendix~\ref{app:direct-operation}.
\section{Proportional allocation and its capacity limits}\label{sec:equivalence}\label{sec:recursion}\label{sec:capacity}\label{sec:duration}
DIRECT serves as the proportional comparison to the segmented construction. For fixed chains, coordinatewise order describes forward traversal of disclosed indices. The normalization in Theorem~\ref{thm:equivalence} gives the exact integer code; the following results describe what happens when accepted openings are reused as its endpoints.

Suppress the component and epoch subscripts. Given active indices $P,Q>0$, choose an interior signing index $p$ and set
\begin{equation}\label{eq:recurrence}
\boxed{q=Q\left(1-\frac pP\right),\qquad Qp+Pq=PQ.}
\end{equation}
Both indices must be integers. The candidate heads are $A'=C_1[q]$ and $B'=C_2[p]$. The transition uses prefixes of the same physical chains; it does not regenerate their seeds. In subsequent rounds, $(P,Q)$ is replaced by $(p,q)$.

\begin{proposition}[Within-epoch monotonicity]\label{prop:monotone}
If $0<p<P$ and $Q>0$, then $0<q<Q$. Consequently, both active indices decrease strictly along an interior renewal-free sequence.
\end{proposition}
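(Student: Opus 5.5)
The plan is to read the first claim directly off the closed form in~\eqref{eq:recurrence} and to obtain the second by induction over the sequence of accepted openings. Write $q=Q(1-p/P)=Q\,(P-p)/P$. The hypothesis $0<p<P$ gives $0<(P-p)/P<1$ as a ratio of positive quantities. Multiplying by $Q>0$ then yields $0<q<Q$. Equivalently, from $Qp+Pq=PQ$ one gets $Pq=Q(P-p)$, whose right side lies strictly between $0$ and $PQ$. Dividing by $P>0$ gives the same bounds.

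Integrality plays no role in the inequalities. It matters only in that the admissible $p$ are those for which $q$ is an integer. The conclusion therefore holds on the set of admissible pairs, which is nonempty only under the divisibility conditions described by Theorem~\ref{thm:equivalence} with $L$ replaced by $P$.

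For the second statement, I will interpret an interior renewal-free sequence as a sequence of rounds within one epoch. The sequence starts from active indices $(P_0,Q_0)$ with $P_0,Q_0>0$. At round $t$ an interior signing index $0<p_t<P_{t-1}$ is chosen, $q_t$ is defined by~\eqref{eq:recurrence}, and the state becomes $(P_t,Q_t)=(p_t,q_t)$.

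The induction hypothesis is that $P_{t-1},Q_{t-1}>0$. The first part then gives $0<q_t<Q_{t-1}$, and interiority gives $0<p_t<P_{t-1}$. Hence $P_t,Q_t>0$, so the hypothesis propagates, and both coordinates strictly decrease at every step. The base case is the registration state, whose active indices equal the positive chain lengths.

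The only point needing care is to state explicitly that interiority of $p_t$ is measured against the current active index $P_{t-1}$, not the original length. It is also worth remarking, as a corollary, that strict decrease of positive integers bounds the length of any such sequence by $\min(P_0,Q_0)-1$. This consequence feeds the capacity discussion that follows. I do not expect a genuine obstacle; the argument is elementary once the sequence is formalized.
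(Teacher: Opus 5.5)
Your proof is correct and matches the paper's argument: the complement $1-p/P$ lies in $(0,1)$, multiplying it by $Q>0$ gives $0<q<Q$, and iterating with the current active indices gives strict decrease. Your explicit induction just spells out the paper's one-line ``iteration'' step, checking that positivity propagates so the hypothesis $Q>0$ holds at every round.
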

\begin{proof}
The ratio $p/P$ lies in $(0,1)$, as does its complement. Multiplying the complement by positive $Q$ gives the asserted inequalities. Iteration proves the sequence statement.
\end{proof}

\begin{proposition}[Inductive verification]\label{prop:verification}
For honest chain values at valid integer indices,
\begin{equation}\label{eq:checks}
f_1^{Q-q}(A')=A,\qquad f_2^{P-p}(B')=B.
\end{equation}
Composing accepted transitions links every later head to the authenticated head at the beginning of its epoch.
\end{proposition}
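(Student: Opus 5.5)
The first claim in Proposition~\ref{prop:verification} follows from the chain definition~\eqref{eq:chains} and the composition law $f^{a}\circ f^{b}=f^{a+b}$ for iterates of a single function. For an honest transition, the heads are $A=C_1[Q]$ and $B=C_2[P]$, and the candidate heads are $A'=C_1[q]$ and $B'=C_2[p]$. Here $(p,q)$ are valid integer indices satisfying~\eqref{eq:recurrence}.

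By Proposition~\ref{prop:monotone}, an interior choice $0<p<P$ gives $0<q<Q$. The exponents $Q-q$ and $P-p$ are therefore positive integers, so the iterates in~\eqref{eq:checks} are well defined with a nonnegative count of evaluations. Then
\[
f_1^{Q-q}(A')=f_1^{Q-q}\bigl(f_1^{q}(s_1)\bigr)=f_1^{Q}(s_1)=A,
\]
and identically $f_2^{P-p}(B')=f_2^{P}(s_2)=B$. This step uses only that the same domain-separated function $f_{a,i,e}$ generates the chain throughout an epoch. Indices are absolute positions from the seed within that epoch, so no epoch change occurs inside a renewal-free transition.

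For the second claim, I would argue by induction along an accepted renewal-free sequence. Let its index pairs be $(P_0,Q_0)=(L_2,L_1)$, then $(P_1,Q_1)$, and so on, with heads $A_t=C_1[Q_t]$ and $B_t=C_2[P_t]$.

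The inductive hypothesis is the following. The head $A_t$ reaches the epoch-initial head $A_0$ after exactly $Q_0-Q_t$ evaluations, and $B_t$ reaches $B_0$ after $P_0-P_t$ evaluations.

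The base case $t=0$ is trivial. For the inductive step, the first claim gives $f_1^{Q_t-Q_{t+1}}(A_{t+1})=A_t$. Composing with the hypothesis yields
\[
f_1^{Q_0-Q_{t+1}}(A_{t+1})=A_0,
\]
because the exponents add telescopically. The second family is handled the same way. Strict monotonicity keeps every exponent positive, so each accepted record contributes a nontrivial link. The chain of accepted records, each checked against its predecessor, then links every later head to the authenticated head $A_0$ or $B_0$ installed at the start of the epoch.

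I expect the main obstacle to be clarifying scope rather than any calculation. The statement concerns honest values, so the induction must be phrased over accepted transitions whose openings satisfy~\eqref{eq:checks}. For an adversarial record, acceptance still means the checks pass, and that alone yields the same telescoping link to the epoch head. I would therefore state the linking part for any sequence of records accepted by the checker, noting that it needs no honesty assumption. I would also note that the induction stops at a renewal, where the index coordinates restart and fresh functions are used, and that the continuity of links across a renewal is the content of Proposition~\ref{prop:renew-continuity}.
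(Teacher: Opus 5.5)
Your proposal is correct and follows essentially the paper's proof. Both checks come from the chain identity $f^{r-j}(C[j])=C[r]$, and the index differences telescope across accepted transitions to link each head to the epoch's initial head. Your remark that the linking part needs only that each check passed, not honesty, is a valid and slightly more general phrasing of that same telescoping argument.
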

\begin{proof}
The chain identity $f^{r-j}(C[j])=C[r]$ proves each equality. Over successive transitions, index differences telescope. Thus the composite traversal from a current index to the epoch's initial index equals the prescribed chain traversal.
\end{proof}

\begin{theorem}[Local incomparability]\label{thm:antichain}
For fixed $P,Q>0$, the set
\[
\mathcal A(P,Q)=\{(p,q)\in\mathbb R_{\ge0}^2:Qp+Pq=PQ\}
\]
is an antichain under coordinatewise order, as is any restriction to integer positions.
\end{theorem}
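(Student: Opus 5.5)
The plan is a direct argument by contradiction, of the same kind used in the proofs of Theorem~\ref{thm:equivalence} and Theorem~\ref{thm:segment-capacity}: two distinct points on a line with negative slope cannot be coordinatewise comparable. Take $(p,q),(p',q')\in\mathcal A(P,Q)$ and suppose $(p,q)\preceq(p',q')$, that is, $p\le p'$ and $q\le q'$. The goal is to show the two points coincide.

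The key step is to subtract the two defining equations. This gives
\[
Q(p'-p)+P(q'-q)=0.
\]
Under the comparability assumption both differences $p'-p$ and $q'-q$ are nonnegative. The weights $Q$ and $P$ are strictly positive by hypothesis. So the left side is a sum of two nonnegative terms that equals zero. Each term must therefore vanish, and since $P,Q>0$ this forces $p=p'$ and $q=q'$. Hence no two distinct elements of $\mathcal A(P,Q)$ are comparable, which is the antichain property.

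The integer statement then needs no separate argument. Any subset of an antichain is an antichain, because comparability of two elements of the subset would already be comparability in $\mathcal A(P,Q)$. This covers $\mathcal A(P,Q)\cap\mathbb Z_{\ge0}^2$, and likewise any further restriction, such as requiring $0<p<P$ as in Proposition~\ref{prop:monotone}.

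I expect no real obstacle. The only point needing care is where positivity of $P$ and $Q$ enters. If one weight were zero, the set would contain comparable points: for example, with $P=0$ it would be the whole half-line $p=0$, $q\ge0$. So the proof should say explicitly that $P,Q>0$ is what makes both terms in the subtracted equation vanish. Optionally, one can add the remark that for $P=Q$ this recovers the constant-sum case of Theorem~\ref{thm:equivalence}.
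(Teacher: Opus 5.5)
Your proposal is correct and follows the paper's proof: subtracting the two defining equations gives $Q(p'-p)+P(q'-q)=0$, and nonnegativity of the differences together with $P,Q>0$ forces both to vanish. Your explicit remark that every subset of an antichain is an antichain usefully spells out the integer-restriction clause, which the paper's proof leaves implicit.
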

\begin{proof}
If $(p,q)\preceq(p',q')$ and both pairs satisfy the equation, subtraction gives $Q(p'-p)+P(q'-q)=0$. Both summands are nonnegative and both coefficients are positive. Hence both differences vanish.
\end{proof}
This theorem concerns alternatives under the same state. Successive legitimate states are comparable: both new indices are smaller. Taking the union of admissible families over different states does not produce a global antichain. Within a fixed epoch and without other leakage, the smallest disclosed index in each chain is its current active index. Earlier disclosed points add no earlier positions to its symbolic forward closure.

Applying Theorem~\ref{thm:equivalence} with $L=P$ gives the exact parametrization
\begin{equation}\label{eq:integer}
\boxed{(p,q)=\left(\frac Pd a,\frac Qd(d-a)\right),\qquad
 d=\gcd(P,Q),\quad a=0,\ldots,d.}
\end{equation}
There are $d-1$ strictly interior pairs. The endpoints disclose a seed of one chain and are excluded from the recursive signing profile. If $d=1$, no interior transition exists. If $P$ or $Q$ is zero, the state is outside the specified profile.

For a $b$-bit block $u$, the direct profile uses
\begin{equation}\label{eq:direct}
a=1+u,\qquad 0\le u<2^b.
\end{equation}
It is admissible exactly when $a<d$. Distinct admitted blocks yield distinct pairs. Universal block capacity therefore requires
\begin{equation}\label{eq:universal}
\boxed{d\ge 2^b+1.}
\end{equation}
For $b=16$, $d$ must be at least 65537. Equal lengths 65535 provide only 65534 interior pairs, not all 65536 blocks. Counting endpoints would change the seed-disclosure semantics rather than solve the reusable-profile problem.

Rounding the proportional line is not equivalent to this exact code. If $Q<P$ and every integer signing position from zero through $P$ is assigned an integer verifier position from zero through $Q$, some verifier index must repeat. Two pairs with equal verifier coordinate and unequal signing coordinate are comparable. Hence arbitrary rounding cannot preserve the same antichain over all those inputs.

For component $i$, let
\[
d_i=\gcd(P_i,Q_i),\qquad
c_i=\min\{2^b,\max\{0,d_i-1\}\}.
\]
With fixed block-to-component assignment, the number of admitted block vectors is
\begin{equation}\label{eq:capacity-product}
|\mathcal D_{\rm adm}(S)|=\prod_{i=1}^k c_i.
\end{equation}
This is a combinatorial count, not an independence claim about concrete hash outputs. If one block is inadmissible, the ordinary signing operation is declined without changing the state or releasing partial openings.

\begin{theorem}[Vector antichain]\label{thm:vector}
Fix a state and an injective block encoding into each component's local antichain. The resulting vectors for distinct admitted digests form an antichain in the coordinatewise order on all $2k$ indices.
\end{theorem}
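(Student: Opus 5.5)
The argument reduces the $2k$-coordinate claim to the componentwise antichains of Theorem~\ref{thm:antichain} and the parametrization~\eqref{eq:integer}. It uses only injectivity and the sum structure; no property of the hash functions enters.

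Fix the state $S$, so the active indices $(P_i,Q_i)$ are fixed for every component. Take two distinct admitted digests $D\neq D'$. Because the block-to-component assignment is fixed and blocks are read most-significant bit first, distinct digests have distinct block vectors $(u_1,\dots,u_k)\neq(u'_1,\dots,u'_k)$. Choose a component $i_0$ with $u_{i_0}\neq u'_{i_0}$. By injectivity of the local encoding, for instance $a=1+u$ in~\eqref{eq:direct} composed with~\eqref{eq:integer}, the pairs $(p_{i_0},q_{i_0})$ and $(p'_{i_0},q'_{i_0})$ are distinct elements of the integer restriction of $\mathcal A(P_{i_0},Q_{i_0})$.

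Suppose, for contradiction, that the full index vectors are comparable, say $(p_i,q_i)_{i}\preceq(p'_i,q'_i)_{i}$ coordinatewise on all $2k$ indices. Restricting to the two coordinates of component $i_0$ gives $(p_{i_0},q_{i_0})\preceq(p'_{i_0},q'_{i_0})$. Theorem~\ref{thm:antichain} then forces equality: subtracting the defining equations gives $Q_{i_0}(p'_{i_0}-p_{i_0})+P_{i_0}(q'_{i_0}-q_{i_0})=0$ with nonnegative terms and positive coefficients. This contradicts the choice of $i_0$. The reverse comparability case is symmetric. Hence distinct admitted digests give incomparable vectors, and the vectors of distinct digests are also pairwise distinct.

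The only point needing care is the passage from distinct digests to distinct encoded vectors. It requires that the block extraction is a bijection between $\ell$-bit digests and block vectors, and that the composed local encoding is injective on admitted blocks. Both are stated hypotheses, so I would state them explicitly rather than treat them as an obstacle. The argument concerns a single fixed state only; as the remark after Theorem~\ref{thm:antichain} notes, vectors from different states are not claimed incomparable.
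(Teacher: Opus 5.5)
Your proof is correct and follows the paper's argument: comparability of the full $2k$-index vectors restricts to comparability in each component, local incomparability forces equality there, and injectivity of the block encoding then contradicts distinctness of the digests. The only difference is order: you fix a component $i_0$ where the blocks differ before applying local incomparability, whereas the paper first derives equality in every component and then uses injectivity.
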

\begin{proof}
If two vectors were comparable, the pair in each component would be comparable. Local incomparability forces equality in every component. Injectivity then forces equality of all blocks, contradicting distinct digests.
\end{proof}
Different messages with the same digest are not separated by this theorem. Message binding, collision resistance, and second-preimage resistance are separate computational questions.

\begin{theorem}[Product contraction]\label{thm:contraction}
Every interior transition under the proportional recurrence~\eqref{eq:recurrence} satisfies
\begin{equation}\label{eq:contraction}
pq\le\frac{PQ}{4}.
\end{equation}
After $t$ transitions without renewal, $P_tQ_t\le P_0Q_0/4^t$.
\end{theorem}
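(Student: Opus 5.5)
The plan is to substitute the recurrence into the product and maximize a one-variable quadratic. By~\eqref{eq:recurrence}, an interior transition has $q=Q(1-p/P)$ with $0<p<P$. Proposition~\ref{prop:monotone} guarantees $0<q<Q$, so every quantity involved is positive. Writing $x=p/P\in(0,1)$ gives
\[
pq = PQ\,x(1-x).
\]
The bound~\eqref{eq:contraction} then reduces to the elementary inequality $x(1-x)\le 1/4$. This follows from $1/4-x(1-x)=(x-\tfrac12)^2\ge0$, or equivalently from AM--GM applied to $x$ and $1-x$. Integrality of $p$ and $q$ is irrelevant here: the inequality holds on the whole real segment $\mathcal A(P,Q)$ of Theorem~\ref{thm:antichain}, so it holds on any integer restriction. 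Equality occurs only at $p=P/2$, $q=Q/2$.

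For the iterated statement, I will induct on $t$. Let $(P_s,Q_s)$ denote the active indices after $s$ transitions within one epoch. In a renewal-free sequence each transition uses the previous openings as endpoints: $(P,Q)$ is replaced by $(p,q)$. Each transition is interior, so Proposition~\ref{prop:monotone} keeps $P_s,Q_s>0$, and the recurrence applies at every step. The single-step bound gives $P_{s+1}Q_{s+1}\le P_sQ_s/4$. Chaining these $t$ inequalities, each with a positive right-hand side, yields $P_tQ_t\le P_0Q_0/4^t$.

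No step is a real obstacle. The only point that needs care is confirming that the hypotheses of the single-step bound persist along the sequence. That is, each later state must still have positive indices and an interior signing position. This is exactly what Proposition~\ref{prop:monotone} and the definition of an interior renewal-free sequence provide. I would also remark that the bound is tight only when $P_s$ is even and $p=P_s/2$ is admissible under~\eqref{eq:integer}, so in general the contraction is strictly faster than the stated rate.
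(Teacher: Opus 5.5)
Your proposal is correct and follows the paper's own proof: substitute $x=p/P\in(0,1)$ to get $pq=PQ\,x(1-x)$, apply $x(1-x)\le 1/4$, and iterate along the renewal-free sequence. Your added points are accurate: Proposition~\ref{prop:monotone} keeps the indices positive, and equality holds only at $x=1/2$.
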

\begin{proof}
With $x=p/P\in(0,1)$, equation~\eqref{eq:recurrence} gives $pq=PQx(1-x)$. The inequality $x(1-x)\le1/4$ proves the first assertion; iteration proves the second.
\end{proof}
If universal block capacity is required after $t$ rounds, then $\gcd(P_t,Q_t)\ge2^b+1$ and consequently
\begin{equation}\label{eq:duration-bound}
P_0Q_0\ge4^t(2^b+1)^2.
\end{equation}
This is necessary, not sufficient: two large remaining indices may be coprime.

\begin{proposition}[A finite exact construction]\label{prop:finite}
Let $B=2^b+1$, $T\ge1$, and $P_0=uB^T$, $Q_0=vB^T$ for positive integers $u,v$. For any sequence $a_n\in\{1,\ldots,B-1\}$, set
\begin{equation}\label{eq:finite}
P_n=\frac{a_n}{B}P_{n-1},\qquad
Q_n=\frac{B-a_n}{B}Q_{n-1}.
\end{equation}
All $T$ transitions are integer, interior, proportional, and injective in the selected block at each fixed state.
\end{proposition}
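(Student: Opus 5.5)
The plan is an induction on $n$ with the explicit closed form
\[
P_n=u\,B^{T-n}\prod_{m\le n}a_m,\qquad Q_n=v\,B^{T-n}\prod_{m\le n}(B-a_m),\qquad 0\le n\le T.
\]
This holds trivially at $n=0$. If it holds at $n-1$, it propagates through~\eqref{eq:finite}, because $B^{T-n+1}/B=B^{T-n}$. Since $n\le T$, the exponent $T-n$ is nonnegative, so $P_n$ and $Q_n$ are positive integers at every step. Positivity uses $u,v>0$ and $1\le a_m\le B-1$, so that both $a_m$ and $B-a_m$ are positive.

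Next I would check that each transition, from state $(P,Q)=(P_{n-1},Q_{n-1})$ to $(p,q)=(P_n,Q_n)$, is of the form~\eqref{eq:recurrence}. From $p=(a_n/B)P$ we get $1-p/P=(B-a_n)/B$, hence
\[
Q\left(1-\frac pP\right)=\frac{B-a_n}{B}\,Q=q,
\]
which is the proportional relation. Interiority $0<p<P$ follows from $0<a_n/B<1$, and Proposition~\ref{prop:monotone} then gives $0<q<Q$.

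For injectivity at a fixed state, the map $a\mapsto aP/B$ is strictly increasing because $P>0$, so distinct selections give distinct signing coordinates. Under the direct encoding~\eqref{eq:direct}, this makes distinct blocks give distinct pairs. I would also connect the construction to~\eqref{eq:integer}: since $B\mid d=\gcd(P,Q)$, the chosen pair is the lattice point with parameter $a=a_nd/B$. This shows it is one of the $d-1$ exact interior codes rather than a rounded point.

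The only step that needs any care is keeping integrality valid for all $T$ steps. The divisibility by $B$ that we spend at step $n$ must come from the factor $B^{T-n+1}$, not from the arbitrary factors $\prod a_m$. This is exactly why the closed form tracks the power of $B$ explicitly, and why the statement covers only $T$ transitions.
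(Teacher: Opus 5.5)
Your proof is correct and follows the paper's argument essentially verbatim. You use the same closed form for $P_n,Q_n$ with the explicit factor $B^{T-n}$, positivity and strict decrease, the proportional relation from the complementary factors $a_n/B$ and $(B-a_n)/B$, and injectivity because $a\mapsto aP/B$ is strictly increasing.

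One small wording point: the block enters through $a_n$ (e.g.\ $a_n=1+u$), not through the lattice parameter of \eqref{eq:direct}. Your own observation $a=a_nd/B$ shows this, and the paper stresses it right after the proposition, so the appeal to \eqref{eq:direct} should be replaced by the injective map $u\mapsto a_n$.
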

\begin{proof}
For $n\le T$,
\[
P_n=uB^{T-n}\prod_{j=1}^n a_j,\qquad
Q_n=vB^{T-n}\prod_{j=1}^n(B-a_j).
\]
These are positive integers and strictly decrease at each step. The proportional equation follows from complementary factors. Distinct $a_n$ give distinct signing positions at a fixed state.
\end{proof}
This construction is not the direct profile in~\eqref{eq:direct}. In the exact parametrization its selected lattice parameter is $(d/B)a_n$, rather than $a_n$, whenever $B\mid d$. It proves existence of a finite recursive encoding, not efficiency of the operational direct profile. Already for $b=16,T=2$, $B^2=4\,295\,098\,369$ evaluations are required for a chain of that length. The construction does not justify practical indefinite reuse.

\begin{proposition}[Capacity after an equal-index direct transition]\label{prop:equal}
If $P=Q=L$, direct encoding gives $(p,q)=(a,L-a)$, and the next interior capacity is $\gcd(a,L)-1$. If $L$ is prime, any interior transition leaves zero capacity.
\end{proposition}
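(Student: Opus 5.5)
The plan is to instantiate the exact parametrization~\eqref{eq:integer} with $P=Q=L$ and then compute a gcd. With $P=Q=L$ we get $d=\gcd(L,L)=L$, so $P/d=Q/d=1$. The lattice points are therefore $(a,L-a)$ for $a=0,\ldots,L$. Under the direct encoding~\eqref{eq:direct}, block $u$ selects $a=1+u$. The admissibility condition $a<d=L$ keeps $a$ in $\{1,\ldots,L-1\}$, so the transition is interior and yields $(p,q)=(a,L-a)$. This is the first claim. Equivalently, one can check directly that $Lp+Lq=L^2$ in~\eqref{eq:recurrence} reduces to $p+q=L$.

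Next comes the capacity. The new state is $(P',Q')=(a,L-a)$, and by~\eqref{eq:integer} applied to this state its number of strictly interior pairs is $\gcd(a,L-a)-1$. The identity $\gcd(a,L-a)=\gcd(a,L)$ follows because any common divisor of $a$ and $L-a$ divides their sum $L$, and conversely any common divisor of $a$ and $L$ divides $L-a$. Both new indices are positive by Proposition~\ref{prop:monotone}, so the state stays inside the profile, and the interior count is $\gcd(a,L)-1$.

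For prime $L$ and $1\le a\le L-1$, we have $\gcd(a,L)=1$, so the capacity is $0$. Equivalently, in the notation of~\eqref{eq:capacity-product}, $c_i=0$. No interior transition exists from the new state, and every later ordinary signing attempt in that component is declined.

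There is no real obstacle here. The only point needing care is the convention: ``capacity'' means the number of strictly interior lattice pairs, i.e.\ $d-1$ for the new state. It should not be confused with the block-capped quantity $c_i=\min\{2^b,\ldots\}$, and we will state this explicitly.
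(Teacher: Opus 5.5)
Your proposal is correct and follows the paper's proof: set $d=L$ in the exact parametrization to obtain $(a,L-a)$, then use $\gcd(a,L-a)=\gcd(a,L)$, which equals one for prime $L$ and $0<a<L$. Your extra remarks are accurate: the divisor argument for the gcd identity, and the distinction between the interior count $d-1$ and the block-capped $c_i$.
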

\begin{proof}
Here $d=L$, so~\eqref{eq:integer} gives $(a,L-a)$. Euclid's identity yields $\gcd(a,L-a)=\gcd(a,L)$, which is one for prime $L$ and $0<a<L$.
\end{proof}

\begin{corollary}[Loss of universal capacity]\label{cor:equal}
For direct $b$-bit encoding, a transition from equal active indices never leaves universal capacity, regardless of how large or composite $L$ is.
\end{corollary}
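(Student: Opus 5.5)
The plan is to read the corollary directly off Proposition~\ref{prop:equal} and combine it with the universal-capacity requirement~\eqref{eq:universal}. Suppose the active indices are equal, $P=Q=L$, and a direct $b$-bit transition occurs. By~\eqref{eq:direct} the lattice parameter is $a=1+u$ with $0\le u<2^b$. Admissibility requires $a<d=L$, so $1\le a\le\min\{2^b,L-1\}$. Proposition~\ref{prop:equal} gives the successor $(p,q)=(a,L-a)$, and the new lattice parameter is
\[
d'=\gcd(a,L-a)=\gcd(a,L).
\]

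The key step is an elementary bound on $d'$. Since $d'$ divides $a$ and $a\ge1$, we have $d'\le a\le 2^b$. Universal capacity at the successor state requires $d'\ge 2^b+1$ by~\eqref{eq:universal}, which is impossible. Equivalently, the next interior capacity $\gcd(a,L)-1$ is at most $2^b-1<2^b$, so the count $c_i$ in~\eqref{eq:capacity-product} falls below $2^b$. The bound uses only $a\le 2^b$ and not the size or factorization of $L$, which accounts for the phrase ``regardless of how large or composite $L$ is.'' The prime case of Proposition~\ref{prop:equal} is only the extreme instance, with $d'=1$.

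I will remark briefly that the argument is per component, so a vector state with equal indices in any component also loses universal vector capacity. I will also note that no transition at all is possible when $a$ is inadmissible, so the statement holds vacuously in that case. I do not expect a real obstacle. The only care needed is to use the signing coordinate $p=a$, not $L-a$, as the divisor bound; using $L-a$ could be large and would not give the contradiction.
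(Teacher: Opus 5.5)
Your proposal is correct and follows the paper's proof: both use $\gcd(p,q)=\gcd(a,L-a)=\gcd(a,L)\le a\le 2^b$, which is below the threshold $2^b+1$ in~\eqref{eq:universal}. Your added remarks on the per-component reading and on the vacuous inadmissible case are accurate, though the paper does not need them.
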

\begin{proof}
Since $a\le2^b$, one has $\gcd(p,q)=\gcd(a,L)\le2^b$, below the threshold in~\eqref{eq:universal}.
\end{proof}
Equal lengths 65537 admit all 16-bit blocks initially but no second interior direct signature. A full renewal as the first operation can replace these chains, but consumes their only operation without signing an ordinary message. Unequal indices can avoid this particular obstruction: with $b=2$, $(P,Q)=(77,44)$ and $a=4$, the result is $(28,28)$, which retains universal capacity. The next direct transition from that equal pair is again subject to the corollary. This example does not establish a generally efficient renewal policy.

\section{Conditional sequential security}\label{sec:history-assumption}\label{sec:security}
An auditor receives the initial authenticated registration, the record sequence, and an authenticated checkpoint $(n,T_n,X(S_n))$. It replays the transitions and requires agreement with the checkpoint. An internally consistent history with a different final commitment is not evidence of membership in the authoritative history. An offline verifier that has only an initial registration may lack sufficient information to choose between retrospective alternatives.

\begin{proposition}[Retrospective forward reconstruction]\label{prop:retrospective}
After openings at indices $(P_t,Q_t)$ have been disclosed within a fixed epoch, all pairs of chain values at indices $(p,q)$ with $p\ge P_t$ and $q\ge Q_t$ can be reconstructed by forward evaluation, even if they represent alternatives under an earlier state.
\end{proposition}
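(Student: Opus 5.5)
The plan is a direct application of the chain identity from Definition~\ref{def:opening}: within one epoch, disclosing $C[z]$ makes every later index on that chain computable by forward evaluation. I will fix the epoch $e$ and a component $i$, with the domain-separated functions $f_{1,i,e}$ and $f_{2,i,e}$ fixed. The disclosed openings are $\omega^{(2)}=C_2[P_t]$ and $\omega^{(1)}=C_1[Q_t]$, taking $P_t$ on the signing chain and $Q_t$ on the verifier chain as in the DIRECT notation; the SEGMENTED case is identical after relabelling. The proof treats each coordinate separately and then combines them, since the two chains are independent physical objects.

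\textbf{Reconstruction.} For any target $p\ge P_t$, apply $f_{2,i,e}$ exactly $p-P_t$ times to $\omega^{(2)}$. The composition rule $f^{p-P_t}(f^{P_t}(s))=f^{p}(s)$ yields $C_2[p]$. In the same way, $f_{1,i,e}^{\,q-Q_t}(\omega^{(1)})=C_1[q]$ for every $q\ge Q_t$. Both exponents are nonnegative precisely because $p\ge P_t$ and $q\ge Q_t$. Exponent $0$ returns the disclosed value itself, so the boundary cases need no separate argument. Because the two coordinates are computed from two different openings, the pair $(C_1[q],C_2[p])$ is obtained for every $(p,q)$ in the upper quadrant. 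This is exactly the coordinatewise up-set of $(P_t,Q_t)$ under $\preceq$.

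\textbf{Earlier states and several components.} Next I show that this covers alternatives under earlier states. By Proposition~\ref{prop:monotone}, and in SEGMENTED by Theorem~\ref{thm:segment-capacity}, active indices strictly decrease within an epoch. Every index disclosed or admissible at an earlier state $S_s$ with $s<t$ therefore satisfies $p\ge P_t$ and $q\ge Q_t$. So every candidate opening pair of an earlier slot, including pairs for digests that were never signed, lies in the reconstructible quadrant. This remains true even though Theorem~\ref{thm:antichain} makes such pairs incomparable with each other within their own state. Applying the argument to each of the $k$ components gives the full vector statement.

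\textbf{Main obstacle.} The computation is trivial; the delicate point is scope. Reconstruction requires that all indices refer to the same epoch's chains. Indices from different epochs do not name the same physical chain, so the claim must be restricted to a fixed epoch. The statement also concerns only the values at those indices. It says nothing about whether a verifier would accept them, which is exactly why authenticated history~H is needed. I will add a remark making both limitations explicit.
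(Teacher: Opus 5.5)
Your \textbf{Reconstruction} paragraph is the paper's proof: apply $p-P_t$ evaluations on the signing chain and $q-Q_t$ on the verifier chain, with both exponents nonnegative. That paragraph alone proves the proposition. The clause ``even if they represent alternatives under an earlier state'' only says that forward evaluation is indifferent to the role a quadrant pair once played. It does not assert that every earlier alternative lies in the quadrant.

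Your second paragraph asserts exactly that, and the assertion is false, so delete it rather than rely on it. Strict decrease of the \emph{selected} active indices controls the disclosed points only, not the \emph{unselected} alternatives. Take $s=t-1$, which your condition $s<t$ allows. Then $(P_t,Q_t)$ is itself one of the admissible pairs under $S_{t-1}$. By Theorem~\ref{thm:antichain}, or the antichain clause of Theorem~\ref{thm:segment-capacity}, every other alternative under $S_{t-1}$ is incomparable with it and therefore has $p<P_t$ or $q<Q_t$. Were your claim true, the current alternatives would be forward-derivable and the directional barrier of the construction would be void.

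In SEGMENTED the claim does hold for $s\le t-2$. The reason is the slot-separation clause of Theorem~\ref{thm:segment-capacity}, not monotonicity of the active indices. In DIRECT it can fail even for older states. Take $b=2$ and $(P_0,Q_0)=(180,12)$. The blocks $u=2$ and then $u=3$ give $(45,9)$ and then $(20,5)$, both with $\gcd\ge 5$. The $u=0$ alternative under $S_0$ is $(15,11)$, and $C_2[15]$ is not a forward image of $C_2[20]$ or of any other disclosed value in that chain.
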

\begin{proof}
Apply $p-P_t$ evaluations in the signing chain and $q-Q_t$ evaluations in the verifier chain. Both exponents are nonnegative.
\end{proof}
Later openings may therefore reconstruct alternatives to historical records. Under H, such reconstructed values do not establish membership in the accepted prefix. This distinction motivates authenticated history without assuming the legitimacy of a newly submitted extension.

Fix a finite horizon $N$ without chain replacement. Impersonation means acceptance of a current-state extension not emitted by the honest signer. An identical retransmission is not a new forgery. A round is the interval during which one authoritative predecessor state is eligible for extension. An adversary observes the history, retains computational state allowed by the model, adaptively requests messages, and schedules candidate submissions. At most one honest record is emitted under that predecessor, and its disclosure is visible even before acceptance. Exact retransmission of this emitted record is excluded from the forgery event; a distinct accepted record is not. The signer emits no successor-state record until H confirms the pending acceptance. Each round's attack budget includes failed verification attempts and all queries, including those before any honest emission. Declined signing requests reveal no openings and consume no state, but their responses and the adversary's computations remain part of its view. No liveness guarantee is made against an adversary that delays delivery indefinitely.

The renewal-free analysis quantifies over finite admissible horizons for the chosen parameters. It does not assert indefinite postponement at fixed parameters: designing a new, longer chain for each desired horizon is a different quantifier order from obtaining unbounded service from one fixed chain. Exhaustion is a failure of availability rather than successful impersonation. In addition to H, the composition theorem assumes:
\begin{enumerate}[label=\textbf{P\arabic*.}]
\item \textbf{Authentic initialization and exclusive use.} The initial registration is authentic, seeds remain secret, and each state is consumed at most once to emit a record. There is no rollback or emission of incompatible records under the same predecessor.
\item \textbf{Invariant transition rules.} Every round uses the registered profile and encoding, binds the message to its state, and applies the prescribed opening checks. Honest transitions use integer interior positions and update all components atomically.
\item \textbf{Uniform one-step security, B.} For every allowed adversary and round $j$, conditional on no previous accepted impersonation, the probability of a first unauthorized accepted extension in that round is at most $\varepsilon_j\in[0,1]$. This guarantee covers the actual reachable-state distribution and the adversary's accumulated view, including chosen messages, prior queries, computations, decline responses, and any honest record emitted under the current predecessor before acceptance. The attack budget within the round is included in $\varepsilon_j$.
\item \textbf{Admissible horizon.} The $N$ honest transitions under consideration have sufficient capacity and require no renewal. A decline changes neither the public state nor the adversary's disclosed opening set.
\end{enumerate}
B requires a bound for exposed prefixes under the actual adaptive transcript, not only for newly sampled initial chains. The theorem composes those conditional bounds; deriving them from a hash-function assumption is a separate reduction problem. This distinction separates the deterministic disclosure invariant from its computational security interpretation.

For example, fix a state and type and suppose an adversary finds distinct admissible payloads $m,m'$ with $D(S,\tau,m)=D(S,\tau,m')$. After obtaining the honest record for $m$, it can reuse its entire opening vector in a record for $m'$ under the same predecessor. All recomputed indices and chain checks coincide. If the substituted record is accepted first, H preserves that acceptance and does not identify the substitution. This conditional attack explains why digest binding must be covered by B: incomparability excludes distinct coordinatewise forward substitutions, but does not distinguish equal digests. It is not an assertion that such a collision can be found efficiently for the concrete profile.

\begin{lemma}[Prefix and disclosure invariant]\label{lem:prefix}
After $n$ honest renewal-free transitions, the active heads in every component are $C_{1,i}[Q_{n,i}]$ and $C_{2,i}[P_{n,i}]$. Every earlier disclosed point in those physical chains has index at least the respective current index and is forward-derivable from that head. The next transition uses the registered index rule with $S_n$ as its reference state: the proportional equation for DIRECT or the next translated codebook for SEGMENTED. The opening checks retain their form.
\end{lemma}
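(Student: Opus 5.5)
The proof is an induction on $n$, the number of honest renewal-free transitions. The hypothesis $\mathcal I_n$ has three parts: (i) every active head equals $C_{1,i}[Q_{n,i}]$ and $C_{2,i}[P_{n,i}]$; (ii) every point disclosed so far in those chains has index at least the current index; (iii) the public state $S_n$ records exactly these indices and heads, together with the unchanged registered parameters.

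\textbf{Base case.} By registration, all epochs are zero and the active indices equal the chain lengths. The heads are therefore $C_{a,i}[L_{a,i}]$. No opening has yet been disclosed, so (ii) holds vacuously.

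\textbf{Inductive step, first part (the heads).} Assume $\mathcal I_n$. By P2, the next round computes its digest from $S_n$ and applies the registered index rule with $S_n$ as reference. For DIRECT this is the recurrence~\eqref{eq:recurrence} with $(P,Q)=(P_{n,i},Q_{n,i})$. For SEGMENTED it is the codebook of slot $j=r+1$ in~\eqref{eq:segment-encoding}. In both cases the new indices lie strictly below the current ones:
\begin{itemize}
\item for DIRECT, by Proposition~\ref{prop:monotone}, since honest transitions use interior positions;
\item for SEGMENTED, by Theorem~\ref{thm:segment-capacity}, using the slot separation and the fact that P4 guarantees $j\le J$, so no renewal occurs.
\end{itemize}
The honest signer discloses $C_{1,i}[q]$ and $C_{2,i}[p]$, and acceptance installs them as heads. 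The acceptance is atomic across all components (P2), and by P1 exactly one record is consumed per state. So (i) and (iii) hold at $n+1$ with $Q_{n+1,i}=q$ and $P_{n+1,i}=p$.

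\textbf{Inductive step, second part (disclosed points).} The only newly disclosed points are the new heads themselves. Every older point had index at least the old active index, which exceeds the new one. Any earlier point at index $z\ge Q_{n+1,i}$ therefore equals $f^{\,z-Q_{n+1,i}}$ applied to the new head, by the chain identity used in Proposition~\ref{prop:verification}. This is the same computation as in Proposition~\ref{prop:retrospective}. A declined request reveals no openings and leaves the state unchanged (P4), so it does not affect the invariant.

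\textbf{Form of the checks.} The next transition's opening checks are $f^{Q_{n+1,i}-q'}(\omega^{(1)})=A_{n+1,i}$ and $f^{P_{n+1,i}-p'}(\omega^{(2)})=B_{n+1,i}$. These are the same formulas with updated parameters, because the chain functions depend only on $(a,i,e)$, and the epoch is fixed in a renewal-free horizon.

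\textbf{Main obstacle.} The delicate point is SEGMENTED's phase-local counter. One must show that $r=n$ within the horizon, so that the slot used really is the next translated codebook and the separation in Theorem~\ref{thm:segment-capacity} applies against the current heads. This follows because $r$ increments with each ordinary record and resets only on renewal, which P4 excludes. For DIRECT the corresponding concern is integrality of the new indices, and P2 supplies it.
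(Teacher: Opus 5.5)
Your proposal is correct and follows essentially the same induction as the paper's proof. Both start from the registration heads at $n=0$ and obtain strict decrease of the indices from Proposition~\ref{prop:monotone} (DIRECT) or Theorem~\ref{thm:segment-capacity} (SEGMENTED). Both then adopt the openings as the new heads, use the chain identity to derive every earlier disclosed point forward, and observe that the verification equations keep their form. Your additional bookkeeping is sound and makes explicit what the paper leaves implicit: the appeals to P1, P2 and P4, the treatment of declined requests, and the observation that $r=n$ in a renewal-free horizon, so that slot $n+1$ is the codebook actually used.
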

\begin{proof}
At $n=0$ the heads are those of the authenticated registration. Assume the statement at $n$. For DIRECT, Proposition~\ref{prop:monotone} gives strict decrease; for SEGMENTED, Theorem~\ref{thm:segment-capacity} gives it. Thus
\[
0<P_{n+1,i}<P_{n,i},\qquad 0<Q_{n+1,i}<Q_{n,i}.
\]
The update adopts the openings at these indices as the new heads. For any historical $r\ge Q_{n,i}$,
\[
C_{1,i}[r]=f_{1,i}^{\,r-Q_{n+1,i}}(C_{1,i}[Q_{n+1,i}]),
\]
and the signing chain has the analogous identity. Finally, the registered rule determines the next admissible indices: equation~\eqref{eq:recurrence} for DIRECT or equation~\eqref{eq:segment-encoding} for SEGMENTED. The two chain-verification equations retain their form. This proves the invariant by induction for all components.
\end{proof}
In the symbolic closure model, previous openings therefore add no earlier point than the current heads. The lemma explains why the directional barrier is preserved. It does not assert that seeds conditioned on the full transcript have the same distribution as newly generated seeds; the computational consequences of auxiliary information are covered by B.

\begin{theorem}[Conditional composition across rounds]\label{thm:security}
Under H and P1--P4, let $F_N$ denote at least one accepted impersonation within the first $N$ rounds. Then
\begin{equation}\label{eq:security}
\boxed{\Pr[F_N]\le1-\prod_{j=1}^{N}(1-\varepsilon_j)
\le\min\left\{1,\sum_{j=1}^{N}\varepsilon_j\right\}.}
\end{equation}
If $\varepsilon_j\le\varepsilon$ uniformly, then $\Pr[F_N]\le\min\{1,N\varepsilon\}$. Independence between rounds is not required.
\end{theorem}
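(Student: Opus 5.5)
The proof is a chain-rule decomposition of the first-failure event. For $j=1,\ldots,N$ let $E_j$ be the event that the first accepted impersonation occurs in round $j$. Let $G_{j-1}$ be the event that no impersonation was accepted in rounds $1,\ldots,j-1$, with $G_0$ the sure event. Then $F_N$ is the disjoint union of the $E_j$, and $G_N$ is its complement. Since $E_j\subseteq G_{j-1}$ and $G_j=G_{j-1}\setminus E_j$, we get
\[
\Pr[G_j]=\Pr[G_{j-1}]\bigl(1-\Pr[E_j\mid G_{j-1}]\bigr)
\]
whenever $\Pr[G_{j-1}]>0$; otherwise both sides vanish trivially. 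Assumption B, stated in P3, gives exactly $\Pr[E_j\mid G_{j-1}]\le\varepsilon_j$. Telescoping then yields $\Pr[G_N]\ge\prod_{j}(1-\varepsilon_j)$, which is the first inequality. No independence between rounds enters: each factor is a conditional probability over whatever joint distribution of reachable states and adversarial views the actual execution produces. That is why B is phrased conditionally on the accumulated transcript.

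The substantive step, and the one I expect to be delicate, is justifying that the conditioning event and the rounds are well defined, so that B applies verbatim. I would argue this from H and P1, P2 and P4 together with Lemma~\ref{lem:prefix}. On $G_{j-1}$, every accepted record so far is honest or an exact retransmission, and retransmissions are not transitions. By P1 and H there is then a unique authoritative predecessor $S_{j-1}$, reached by $j-1$ honest renewal-free transitions. By P4 the horizon has capacity, so round $j$ exists and uses the registered rule, as Lemma~\ref{lem:prefix} guarantees. By P2 the checks retain their form. H also ensures that a rejected or late candidate cannot later be reordered into an earlier round.

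The round-$j$ event $E_j$ is then precisely the event that B bounds: an accepted extension of $S_{j-1}$ that the honest signer did not emit. Declines alter neither the state nor the disclosed openings, by P4. They are therefore part of the view already covered by $\varepsilon_j$ and create no extra rounds. Accepted impersonations occurring only after round $N$ are irrelevant to $F_N$.

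The remaining inequalities are elementary. I would prove $1-\prod_{j}(1-\varepsilon_j)\le\sum_j\varepsilon_j$ by induction, using $(1-x)(1-y)\ge1-x-y$ for $x,y\in[0,1]$. The bound by $1$ holds because the product is nonnegative, as each $\varepsilon_j\in[0,1]$. For the uniform case I substitute $\varepsilon_j\le\varepsilon$ and use monotonicity of the product in each factor, giving $\min\{1,N\varepsilon\}$.
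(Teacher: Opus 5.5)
Your proof is correct and essentially the same as the paper's. The paper runs the same first-failure recurrence, written as $\Pr[F_{n+1}]=\Pr[F_n]+\Pr[\overline F_n]\Pr[E_{n+1}\mid\overline F_n]$ and closed using monotonicity of $x+(1-x)\varepsilon_{n+1}$; this is just the complementary form of your telescoping bound on $\Pr[G_N]$, and the paper likewise justifies applying B on $\overline F_n$ via H, P1, P2 and Lemma~\ref{lem:prefix}, then obtains the comparison with $\min\{1,\sum_j\varepsilon_j\}$ by induction.
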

\begin{proof}
\emph{Base case.} P1 authenticates $S_0$ and B gives $\Pr[F_1]\le\varepsilon_1$.

\emph{Induction hypothesis.} Suppose $\Pr[F_n]\le1-\prod_{j=1}^{n}(1-\varepsilon_j)$. Conditional on $\overline F_n$, every accepted record so far was emitted by the signer, and H fixes the authoritative order and current state. Lemma~\ref{lem:prefix} preserves the prefix structure; P1 and P2 prevent incompatible reuse or changes to the verification rule.

\emph{Inductive step.} Let $E_{n+1}$ be a first unauthorized acceptance in the next round. When $\Pr[\overline F_n]>0$, B yields $\Pr[E_{n+1}\mid\overline F_n]\le\varepsilon_{n+1}$. Consequently,
\begin{align*}
\Pr[F_{n+1}]&=\Pr[F_n]+\Pr[\overline F_n]\Pr[E_{n+1}\mid\overline F_n]\\
&\le\Pr[F_n]+(1-\Pr[F_n])\varepsilon_{n+1}\\
&\le1-\prod_{j=1}^{n+1}(1-\varepsilon_j).
\end{align*}
The last inequality uses monotonicity of $x+(1-x)\varepsilon_{n+1}$ in $x$. If $\Pr[\overline F_n]=0$, the corresponding bound is trivial under the induction hypothesis. Finally, $1-\prod_j(1-\varepsilon_j)\le\sum_j\varepsilon_j$ follows by induction, and probabilities are bounded by one.
\end{proof}

\begin{corollary}[Polynomial horizon]
For a security-parameter family indexed by $\lambda$, if $N(\lambda)$ is polynomial and one negligible function $\varepsilon(\lambda)$ uniformly bounds all rounds, the impersonation advantage over that horizon is negligible.
\end{corollary}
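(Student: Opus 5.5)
The plan is to obtain the corollary directly from Theorem~\ref{thm:security}, applied with the uniform bound. Fix $\lambda$ and apply the theorem at the parameters indexed by $\lambda$, with horizon $N=N(\lambda)$ and $\varepsilon_j\le\varepsilon(\lambda)$ for every $j\le N(\lambda)$. This gives
\[
\Pr[F_{N(\lambda)}]\le\min\{1,N(\lambda)\varepsilon(\lambda)\}\le N(\lambda)\varepsilon(\lambda).
\]
It then remains to show that the product of a polynomial and a negligible function is negligible. Take any $c>0$. Choose $d$ and $\lambda_0$ with $N(\lambda)\le\lambda^d$ for $\lambda\ge\lambda_0$. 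Negligibility of $\varepsilon$ at exponent $c+d$ gives $\varepsilon(\lambda)\le\lambda^{-(c+d)}$ for all sufficiently large $\lambda$. Multiplying, $N(\lambda)\varepsilon(\lambda)\le\lambda^{-c}$ eventually, which is the definition of negligible.

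The only real obstacle is quantifier hygiene rather than any computation. The hypotheses of Theorem~\ref{thm:security} must hold for every $\lambda$ in the family. In particular, P4 must hold for each $\lambda$: the parameters at $\lambda$ need capacity for $N(\lambda)$ renewal-free rounds, so the horizon is admissible for that parameter set. This matches the remark preceding P1 that a longer chain is designed for each horizon.

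The uniform bound $\varepsilon(\lambda)$ must also be a single function, independent of the round index $j$ and of the adversary within the allowed class. Otherwise the sum $\sum_j\varepsilon_j$ need not be dominated by $N\varepsilon$. I would state these conditions explicitly as the reading of ``uniformly bounds all rounds'' and then conclude.
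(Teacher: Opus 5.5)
Your proposal is correct and follows the paper's proof: you specialize Theorem~\ref{thm:security} to the uniform bound to obtain $\min\{1,N(\lambda)\varepsilon(\lambda)\}$, then use that a polynomial times a negligible function is negligible. The only additions are that you prove this closure property explicitly and state that H and P1--P4, including an admissible horizon under P4, must hold at every $\lambda$; the paper leaves that requirement implicit (and for the domination by $N\varepsilon$ only uniformity in the round index $j$ is actually needed, not independence from the adversary).
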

\begin{proof}
A polynomial times a negligible function is negligible. Apply the uniform bound in Theorem~\ref{thm:security}.
\end{proof}
The concrete 256-bit profile is not assigned a numerical security level by this asymptotic statement. A small bound for infinitely many attempts with positive per-round success probability also does not follow. The theorem removes the need for a new composition proof for each round count, but not the need to justify its one-step premise.

H is ideal in the preceding theorem. If an implementation agrees with the ideal service until an event $B_H$ of probability at most $\delta_H(N)$, an identical-until-failure argument and a union bound give
\[
\Pr[F_N]\le\min\left\{1,\delta_H(N)+\sum_{j=1}^{N}\varepsilon_j\right\}.
\]
Authentic initialization and exclusive state use remain assumptions; modeled failures of those mechanisms require additional bounds. A hash-linked list alone is not assigned a negligible $\delta_H$.

HSS also authenticates successor public keys, but through its specified tree hierarchy~\cite{RFC8554}; the analogy does not prove security of a RecHash replacement certificate. The theorem does not include \texttt{RENEW} transactions. Extending it to them requires a one-step guarantee for renewal certificates and the post-replacement state distribution. DIRECT additionally requires analysis of adaptive nonce selection; SEGMENTED has no admissibility search. A quantum-adversary version would likewise require B and H to hold against that adversary class; the probabilistic composition argument alone does not establish those premises. In a random-oracle treatment, quantum-accessible queries must be included explicitly~\cite{Boneh2011}; multi-target effects also require concrete analysis~\cite{Perlner2022}.

For SEGMENTED the general theorem specializes to the policy-limited horizon below. Later openings can reconstruct values in earlier segments, so authenticated history remains necessary for membership claims.

\begin{corollary}[Segmented renewal-free composition]\label{cor:segment-security}
Assume H, authentic initialization, exclusive state use, the registered segmented transition rules, and a uniform one-step security premise $\mathrm B_{\rm seg}$ for the actual accumulated transcripts. For an admissible ordinary-record horizon, take $N\le J-1$ under \texttt{FULL-RESERVED}, or $N\le J$ under \texttt{TERMINAL}. The reserved renewal certificate is excluded. The impersonation probability is at most
\[
1-\prod_{j=1}^{N}(1-\varepsilon^{\rm seg}_j)
\le\min\{1,\textstyle\sum_{j=1}^{N}\varepsilon^{\rm seg}_j\}.
\]
\end{corollary}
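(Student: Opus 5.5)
The plan is to obtain the corollary as a direct specialization of Theorem~\ref{thm:security}. To do so, I will check that each of its hypotheses H and P1--P4 holds for SEGMENTED over the stated horizon, with $\varepsilon_j=\varepsilon^{\rm seg}_j$. H, authentic initialization, and exclusive use are assumed verbatim, so they supply H and P1. P2 is supplied by the registered segmented transition rules: the codebook~\eqref{eq:segment-encoding}, the digest~\eqref{eq:digest} binding the message to $X(S)$, the opening checks in $\mathsf{CheckStep}$, and atomic updates of all $2k$ heads. P3 is exactly $\mathrm B_{\rm seg}$, read as the one-step premise for the reachable segmented states and accumulated transcripts.

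The substantive step is P4, the admissible horizon. I will argue that the first $N$ honest transitions require no renewal and always have capacity. Under \texttt{FULL-RESERVED} the ordinary records occupy slots $j=1,\ldots,N\le J-1$. Under \texttt{TERMINAL} they occupy slots $j=1,\ldots,N\le J$. In either case the policy in the Prepare step permits an ordinary record in every one of these slots, and no slot in this range is forced to carry a renewal. Theorem~\ref{thm:segment-capacity} then shows that every digest has an interior opening vector in each such slot, independently of earlier messages, so no signing request is declined for lack of capacity. The nonce is zero and there is no admissibility search, so declines arise only from policy or size limits. Such declines are rejected before any disclosure, which gives the second clause of P4.

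Because the horizon never reaches slot $J$ under \texttt{FULL-RESERVED}, the reserved renewal certificate lies outside it. This is where I would spend the most care. The composition theorem excludes \texttt{RENEW} transactions, so I must confirm that within the first $N$ rounds the chains are never replaced and the epoch is fixed. That is what makes Lemma~\ref{lem:prefix} applicable, through its SEGMENTED case, which invokes Theorem~\ref{thm:segment-capacity} for strict decrease. The counter $r$ stays equal to the number of ordinary records consumed, so slot indices and round indices coincide and the $\varepsilon^{\rm seg}_j$ are indexed consistently.

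With P1--P4 and H in place, applying~\eqref{eq:security} yields $\Pr[F_N]\le1-\prod_{j=1}^N(1-\varepsilon^{\rm seg}_j)$. The second inequality follows from the union-type bound stated in that theorem together with the fact that probabilities are at most one.
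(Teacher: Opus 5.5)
Your proposal is correct and takes essentially the paper's route. It specializes Theorem~\ref{thm:security} with $\varepsilon_j=\varepsilon^{\rm seg}_j$, taking the prefix invariant from Theorem~\ref{thm:segment-capacity} (through the SEGMENTED case of Lemma~\ref{lem:prefix}) and the conditional first-failure bound from $\mathrm B_{\rm seg}$, so the base case and recurrence carry over verbatim; your explicit check of P4 from the slot policy is just a more detailed version of the paper's one-line argument.
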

\begin{proof}
Use Theorem~\ref{thm:segment-capacity} for the prefix invariant and $\mathrm B_{\rm seg}$ for the conditional first-failure bound. The base case and probability recurrence in Theorem~\ref{thm:security} then apply verbatim; neither uses product contraction.
\end{proof}
A one-step result for DIRECT does not automatically establish $\mathrm B_{\rm seg}$. Similarly, the corollary does not cover renewal certificates or claim a numerical security level. Short digest blocks are position encodings, not truncated chain outputs; security must be assessed for the full digest, all openings, and the entire transcript. Increasing $k$ also increases the number of exposed hash values and potential attack targets, which a concrete reduction must account for.

\begin{proposition}[Finite renewal-free horizon]\label{prop:finite-horizon}
In either profile, a renewal-free sequence of $N$ honest interior transitions from initial active indices $(P_{0,i},Q_{0,i})$ satisfies
\[
N\le\min_i\{P_{0,i}-1,Q_{0,i}-1\}.
\]
For SEGMENTED, the sharper exact slot bound is $J$; the ordinary-record bound is $J-1$ when a renewal slot is reserved.
\end{proposition}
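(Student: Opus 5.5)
The plan is to combine strict monotonicity with integrality. Fix a component $i$ and consider the active indices $P_{n,i},Q_{n,i}$ after $n$ honest renewal-free interior transitions. By Lemma~\ref{lem:prefix} the active indices decrease strictly in both chains: $0<P_{n+1,i}<P_{n,i}$ and $0<Q_{n+1,i}<Q_{n,i}$. For DIRECT this strict decrease comes from Proposition~\ref{prop:monotone}, and for SEGMENTED from Theorem~\ref{thm:segment-capacity}. All indices are integers (P2), so each transition lowers each active index by at least one. Hence $P_{N,i}\le P_{0,i}-N$.

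Interiority requires every disclosed index to stay positive, so $P_{N,i}\ge1$. This gives $N\le P_{0,i}-1$, and the same argument gives $N\le Q_{0,i}-1$. Since the bound must hold for every component simultaneously, take the minimum over $i$ and over both chains.

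For the sharper SEGMENTED claim, I would argue from the schedule rather than from monotonicity. Within a phase the $n$-th transition uses slot $j=n$, determined by the authenticated state through the counter $r$. A slot $j$ is defined only when $\beta_{j,i}=L_i-js_i\ge0$ for all $i$, which is exactly $j\le J$ by \eqref{eq:segment-size}. Renewal-free means no phase boundary is crossed, so at most $J$ transitions occur. Under \texttt{FULL-RESERVED}, slot $J$ must carry the renewal certificate and is excluded from the renewal-free ordinary horizon, which leaves $J-1$ ordinary records.

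I would also record why this bound is sharper than the general one. With $P_{0,i}=Q_{0,i}=L_i$ and $s_i\ge3$, we have $J\le\lfloor L_i/3\rfloor\le L_i-1$.

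The main subtlety is the DIRECT case. There, proportional transitions need not exhaust indices one unit at a time, so the general bound is only necessary. The argument must use only integrality and positivity, not availability. Otherwise it would overclaim.
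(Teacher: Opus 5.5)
Your proof is correct and is essentially the paper's argument. Integrality plus strict decrease (Lemma~\ref{lem:prefix}) forces each index down by at least one per transition, positivity gives $P_{N,i}\ge1$, and the SEGMENTED bounds $J$ and $J-1$ follow from the slot schedule of Theorem~\ref{thm:segment-capacity} and the registered policy; your added check that $s_i\ge3$ gives $J\le\min_i(L_i-1)$ usefully justifies the word ``sharper,'' which the paper leaves implicit.
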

\begin{proof}
Each positive integer index decreases by at least one at every transition and remains at least one. Thus $1\le P_{N,i}\le P_{0,i}-N$, and similarly for $Q$. The segment bound and the reservation follow from Theorem~\ref{thm:segment-capacity} and the registered policy.
\end{proof}
Consequently, availability over a longer operating lifetime requires fresh authenticated material or a separately specified mechanism. Even if a one-step bound were uniform, its accumulation over a growing horizon would still have to be controlled; a positive fixed error bound does not yield a horizon-independent small failure probability.

\section{Parameter selection and resource analysis}\label{sec:complexity}\label{sec:segment-cost}
Table~\ref{tab:segments} keeps $\ell=\lambda=256$ and $L=65535$ while changing the component count. These are calculations for the segmented extension, not performance measurements reported in 2019.
\begin{table}[htbp]\centering\small
\begin{tabular}{@{}rrrrrr@{}}\toprule
Pairs $k$ & Physical chains & Bits/block & Steps/segment & Slots $J$ & Openings (KiB)\\\midrule
16 & 32 & 16 & 65537 & 0 & 1\\
32 & 64 & 8 & 257 & 255 & 2\\
64 & 128 & 4 & 17 & 3855 & 4\\
128 & 256 & 2 & 5 & 13107 & 8\\
256 & 512 & 1 & 3 & 21845 & 16\\\bottomrule
\end{tabular}
\caption{Capacity and opening payload at fixed physical-chain length. One KiB is 1024 bytes; metadata are excluded. The zero in the first row means that this strictly interior encoding needs two additional steps, not that the 2019 protocol has zero capacity.}\label{tab:segments}
\end{table}
For $k=32$, $65535=255\cdot257$. Slot 1 has boundaries $(65278,65535)$ and interior indices $65279,\ldots,65534$; slot 2 has boundaries $(65021,65278)$; slot 255 has boundaries $(0,257)$ and interior indices $1,\ldots,256$. The seed is never opened. Every slot accommodates all 256 block values without rounding or nonce search. For example, in slot 1, $u=0$ gives $(p,q)=(65279,65534)$ and $u=255$ reverses these coordinates.

Exactly 256 slots require $L=256\cdot257=65792$, an increase of 257 evaluations per chain, approximately $0.392\%$. Unequal block sizes also provide intermediate choices: with $k=48$, 16 blocks have six bits and 32 have five bits, giving $\lfloor65535/65\rfloor=1008$ slots at a three-KiB opening payload. With tailored lengths for those same 1008 slots, the six-bit pairs need 65520 steps and the five-bit pairs need 33264 steps. This avoids unused short-block capacity inherent in a common length.

For $\lambda$-bit chain outputs, the opening payload is $2k\lambda$ bits per record. The head portion of the current public state has the same size. Initial construction requires exactly $G=2\sum_i L_i$ chain-function evaluations, excluding seed generation and commitments. A full replacement has the same cost for the new lengths. These counts are not wall-clock timings.

For $k=32$, $\lambda=256$, and $L_i=65535$, $G=4194240$ evaluations and the opening payload is 2048 bytes. Storing all positive-index chain values uses $2kL\lambda/8=134215680$ bytes, approximately 128 MiB, in addition to seeds and metadata. Keeping only the 64 independent seeds uses 2048 bytes for those seeds but requires recomputation. Caching or chain traversal algorithms provide other tradeoffs that are not benchmarked here. An archive of $N$ opening vectors alone takes $N\,2k\lambda$ bits.

\begin{proposition}[Message-independent chain-evaluation counts]\label{prop:segment-cost}
In a segmented phase, verification of the first record uses
\[
V_1=\sum_i s_i
\]
chain evaluations; each subsequent record uses $V_j=2\sum_i s_i$. After $t\ge1$ records, the cumulative count is $(2t-1)\sum_i s_i$. If signing independently recomputes each opening from its seed, its count in slot $j$ is
\[
A_j=\sum_i\bigl[2L_i-(2j-1)s_i\bigr].
\]
These expressions exclude digest evaluation, record processing, and any additional local verification before publication.
\end{proposition}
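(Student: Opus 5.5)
The plan is to compute each verification exponent directly from the slot encoding~\eqref{eq:segment-encoding} and the head indices fixed by the state machine, and then to sum over components and over the two chains. By Theorem~\ref{thm:segment-capacity} and Lemma~\ref{lem:prefix}, the heads before slot $j$ in pair $i$ are at indices $Q_i=P_i=L_i$ for $j=1$. For $j\ge2$ they are the previously accepted openings $Q_i=q_{j-1,i}$ and $P_i=p_{j-1,i}$. The checker evaluates $f_1$ exactly $Q_i-q_{j,i}$ times and $f_2$ exactly $P_i-p_{j,i}$ times. The cost of slot $j$ is therefore the sum of these two exponents over $i$.

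\emph{First record.} With $\beta_{1,i}=L_i-s_i$, the two exponents are
\[
L_i-q_{1,i}=s_i-M_i+u_i=1+u_i,\qquad L_i-p_{1,i}=s_i-1-u_i=M_i-u_i.
\]
Their sum is $M_i+1=s_i$, independent of $u_i$, which gives $V_1=\sum_i s_i$.

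\emph{Slot $j\ge2$.} Here I write $u$ for the previous block value and $u'$ for the current one, and use $\beta_{j-1,i}=\beta_{j,i}+s_i$. Then
\[
q_{j-1,i}-q_{j,i}=s_i-u+u',\qquad p_{j-1,i}-p_{j,i}=s_i+u-u'.
\]
The sum is $2s_i$, so the dependence on both messages cancels. This cancellation is really the only point that needs care: the individual exponents depend on the messages, and only their sum does not. Summing over $i$ gives $V_j=2\sum_i s_i$. The cumulative count over $t$ records is then $\sum_i s_i+(t-1)\cdot 2\sum_i s_i=(2t-1)\sum_i s_i$.

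\emph{Signing.} Recomputing the two openings from their seeds costs $q_{j,i}+p_{j,i}$ evaluations in pair $i$. By~\eqref{eq:translated-sum} this equals
\[
2\beta_{j,i}+s_i=2L_i-2js_i+s_i=2L_i-(2j-1)s_i.
\]
Summing over $i$ gives $A_j$. The stated exclusions (digest evaluation, record processing and extra local checks) are outside these counts by definition, so nothing further needs to be shown. I do not expect a real obstacle; the one thing to be careful about is using the correct previous head for $j\ge2$, which is the prior opening and not the segment boundary.
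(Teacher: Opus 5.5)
Your proposal is correct and is essentially the paper's argument. The paper uses $P_{0,i}+Q_{0,i}=2L_i$ and $p_{j,i}+q_{j,i}=2L_i-(2j-1)s_i$, so the total verification exponent is the drop in the index sum ($s_i$ in the first slot, $2s_i$ thereafter); this packages your coordinatewise cancellation of $u$ and $u'$ into one subtraction, and the seed-only signing count is the same sum of the two opening indices.
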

\begin{proof}
At initialization, $P_{0,i}+Q_{0,i}=2L_i$. Equation~\eqref{eq:translated-sum} gives $p_{j,i}+q_{j,i}=2L_i-(2j-1)s_i$. Subtracting the new sum from the preceding sum yields $s_i$ in the first slot and $2s_i$ in every later slot. Seed-only signing is the sum of the two opening indices. 
\end{proof}
For the 32-pair configuration, the first verification takes 8224 chain evaluations and each later verification takes 16448. Across 255 records, verification takes $509\cdot8224=4186016$ evaluations. Seed-only signing takes 4186016 evaluations for the first record and 8224 for the last. Its total over $t$ records is $\sum_i(2tL_i-t^2s_i)$. Thus small verification counts do not imply equally inexpensive seed-only signing.

With one reserved certificate and full-length replacement, the generation cost charged per application record is
\begin{equation}\label{eq:amortization}
\frac{2\sum_i L_i}{J-1},\qquad J\ge2.
\end{equation}
This accounts for one chain batch per phase in repeated operation; startup and the final partially used phase must be accounted for separately in a finite deployment. For 32 pairs and 254 application records, it is $4194240/254\approx16512.76$ evaluations per application record. Renewal adds one opening vector per phase, so the amortized opening traffic, excluding metadata and replacement-head payloads, is $(2k\lambda/8)J/(J-1)$ bytes per application record. Each certificate also carries all replacement heads, lengths, epochs, and identifiers; those fields must not be hidden inside the opening-only size.

A useful parameter choice therefore depends on at least four constraints: the desired application-record horizon, the permitted opening payload, initialization work, and traversal memory. If the opening budget is $B$ bytes and $\lambda$ is divisible by eight, then $k\le\lfloor4B/\lambda\rfloor$. For $R$ application records with in-band renewal, replace $R$ by $R+1$ in~\eqref{eq:target-length}. For example, a 2048-byte opening budget permits at most 32 pairs with 256-bit outputs, and 256 application records plus renewal require length 66049 in the uniform 8-bit allocation.

At fixed physical length, more pairs increase initialization work linearly but can increase the slot horizon sharply. At a fixed target horizon, shorter blocks can instead reduce total generation work, as established by the balanced-allocation analysis. Neither comparison alone defines an optimal deployment. Longer admissible lifetimes also enlarge the transcript that the one-step security premise must cover. The calculations provide explicit operating points and constraints; measured timings and a concrete cryptographic reduction remain separate requirements.

For scale, FIPS 205 lists complete signature lengths of 7856 bytes for SLH-DSA-128s and 29792 bytes for SLH-DSA-256s, with 32-byte and 64-byte public keys, respectively~\cite[Table 2]{FIPS205}. RecHash's 2048-byte figure counts openings only: it excludes metadata, registration, renewal payloads, and authenticated-history infrastructure. Its current head vector also contains 2048 bytes before metadata. These figures compare different verification interfaces and have no established equal-security interpretation; they do not demonstrate a signature-size advantage over SLH-DSA.

The lifetime constraint is likewise construction-specific. The draft SP 800-230 explores a bounded-use SLH-DSA regime~\cite{NIST230}; its cap is not a physical segment count and must not be substituted into equation~\eqref{eq:balanced}. For a deployment comparison, the same application workload must be charged for signing, verification, synchronization, storage, and key replacement. A checkpoint authenticated by an external signature scheme would add that scheme's costs rather than eliminate the history assumption.

Figures~\ref{fig:lifetime} and~\ref{fig:tradeoff} show the fixed-length and fixed-horizon tradeoffs with 256-bit outputs and two physical chains per component. Fifteen reproducible checks cover the integer formulas, segmented allocations, toy chain traversals, the worked renewal trace, probability recurrence, and plotted datasets; they do not establish the one-step hardness premise or validate a production implementation.

\begin{figure}[!htbp]
\centering\includegraphics[width=0.76\textwidth]{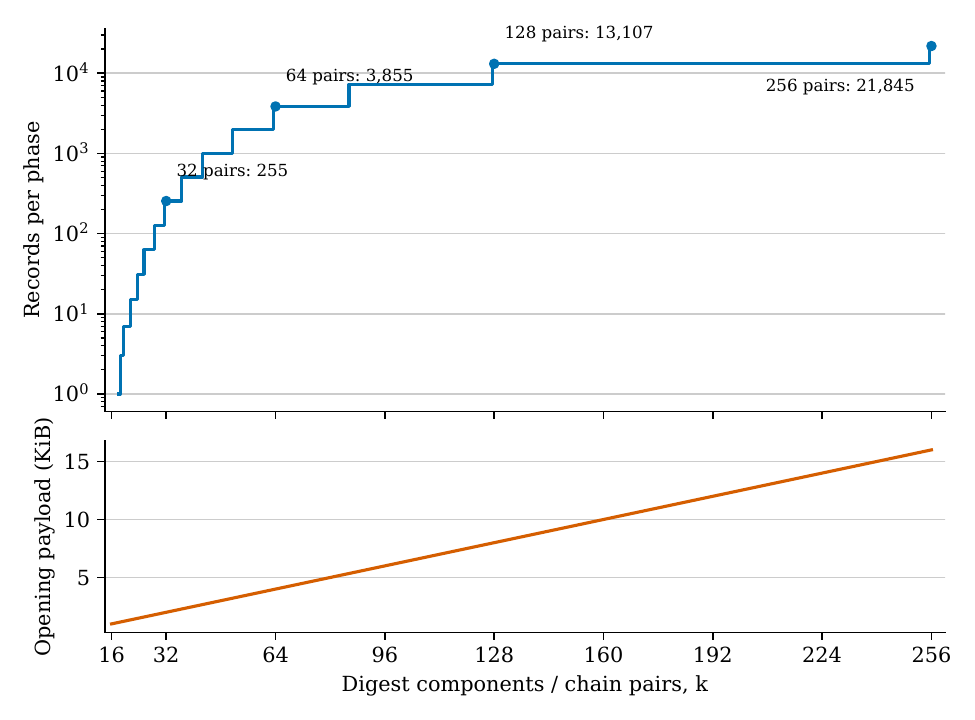}
\caption{Fixed-length parameter analysis for $\ell=\lambda=256$ and $L=65535$. Upper panel: scheduled records; renewable phases reserve one of these records for their certificate. Zero-capacity cases are omitted from the logarithmic ordinate. Lower panel: opening payload, excluding metadata. Curves evaluate the stated formulas for every integer $16\le k\le256$; labeled points identify selected configurations.}\label{fig:lifetime}
\vspace{1em}
\centering\includegraphics[width=0.76\textwidth]{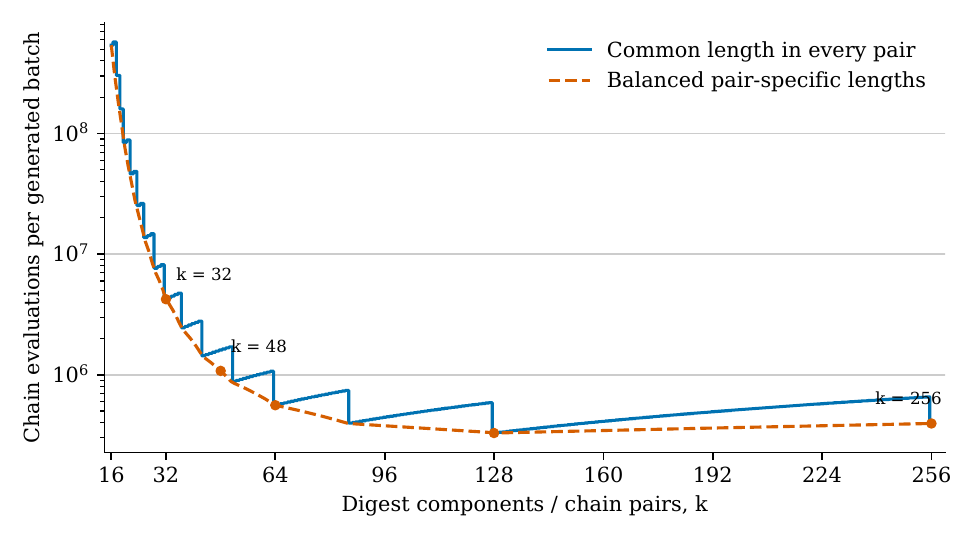}
\caption{Generation cost for a fixed horizon of 256 application records and one renewal certificate. The common-length design uses $L=257(2^{\lceil256/k\rceil}+1)$ in every chain. The pair-specific design uses balanced block sizes and $L_i=257(2^{b_i}+1)$. Each curve counts the evaluations needed to construct all $2k$ chains once. These are analytic counts, not timing measurements; opening payload remains $64k$ bytes.}\label{fig:tradeoff}
\end{figure}
\clearpage
\section{Conclusion}
The decreasing position relation recovered from the 2019 hash-entangled construction becomes an exact two-coordinate constant-sum antichain under integer normalization. This local equivalence connects the historical design to the broader study of incomparable chain encodings while preserving the distinction between complete protocols. DIRECT demonstrates the arithmetic and contraction constraints of proportional reuse. SEGMENTED retains local incomparability and places successive codes in disjoint intervals, giving a deterministic phase capacity and an explicit renewal reserve.

Balanced partitioning relates that capacity to the number of physical chains, transmitted openings, and generation work. At length 65535, 32 pairs support 254 application records and a renewal certificate with a two-KiB opening vector per record. The protocol requires authenticated order, serialized disclosure, and synchronized acceptance. Its probability-composition result is conditional on an adaptive one-step guarantee and applies within a renewal-free horizon. Concrete reductions, security across replacement, and measured implementation costs are the remaining steps toward assessing post-quantum deployment.
\appendix
\section{Canonical serialization}\label{app:encoding}

$\Enc$ encodes trees with byte-string, nonnegative-integer, and list types. Type tags are the single bytes 0, 1, and 2. A byte string consists of tag 0, its length in eight big-endian bytes, and its content. An integer consists of tag 1, the length of its minimal big-endian magnitude in eight bytes, and that magnitude; zero uses one zero byte. A list consists of tag 2, its element count in eight big-endian bytes, and the concatenated encodings of its elements. Objects must fit the length fields and registered resource limits. Leading zero bytes in nonzero integer magnitudes, inconsistent list counts, and trailing data are rejected.

Labels are ASCII byte strings. $\Enc(x_1,\ldots,x_r)$ denotes the encoding of a list. Applications convert text to bytes before signing; no implicit text normalization occurs after signing. The parameter object is
\[
(\texttt{RecHash-v2},k,b,\lambda,L_{\max},M_{\max},N_{\max},\texttt{DIRECT}).
\]
The DIRECT profile retains this protocol identifier and encoding. For SEGMENTED, replace the parameter object by $(\texttt{RecHash-seg-v1},\ell,\lambda,(b_i)_{i=1}^k,L_{\max},M_{\max},\pi)$, where $\pi$ is the ASCII policy label \texttt{FULL-RESERVED} or \texttt{TERMINAL}, $k$ is the list length, $\sum_i b_i=\ell$, and $\ell=\lambda=256$ in the concrete profile. Validate $1\le k\le\ell$, positive block sizes, and registered size limits. Append the epoch-local slot counter $r$ to the ordinary state list; the complete object is covered by its state digest. A SEGMENTED registration includes this profile object and initializes the state with $r=0$. The list of physical lengths determines $J$. The canonical record and history-update formats are unchanged. Profile separation is authenticated through the state; switching profiles requires a separately authenticated registration. Write each component as the list $((e_1,L_1,Q,A),(e_2,L_2,P,B))$, and order the component list by increasing $i$. The registration object is $G=(\mathsf{params},\mathsf{id},\mathsf{components})$, with zero epochs and active indices equal to initial lengths. State serialization is $(\mathsf{params},\mathsf{id},n,T,\mathsf{components})$ for DIRECT, with $r$ appended for SEGMENTED. Registration initializes $n=r=0$, with $r$ omitted for DIRECT.  A renewal payload encodes $(W,\nu)$ with replacement entries in the field order of~\eqref{eq:replacement}. The signed record follows~\eqref{eq:record}. The verifier requires canonical encoding of embedded renewal data as well as of the outer record.

The default segmented policy is \texttt{FULL-RESERVED}. A separately registered \texttt{TERMINAL} policy permits $J$ ordinary records and no renewal. Policy changes within a registered phase are not allowed. The transition rules of Section~\ref{sec:algorithms} apply with the registered profile; its checker rejects stale predecessors, skipped slots, profile mismatches, and unsupported replacement patterns. State must be committed durably before publication, and an identical retransmission must not consume another slot. Stateful-signature guidance provides implementation context for exclusive state use~\cite{NIST208,RFC8391}, without transferring its security guarantees to RecHash.
\section{DIRECT operation and renewal policy}\label{app:direct-operation}
DIRECT retains the record format~\eqref{eq:record}, history update~\eqref{eq:history}, exclusive state use, and acceptance synchronization of Section~\ref{sec:algorithms}. It uses uniform $b$-bit blocks with $kb=\ell$. For each component, the checker computes $d_i=\gcd(P_i,Q_i)$ and $a_i=1+u_i$, rejects $a_i\ge d_i$, and sets $p_i=(P_i/d_i)a_i$ and $q_i=(Q_i/d_i)(d_i-a_i)$. It verifies the same two traversals and adopts the accepted indices and openings. There is no segment counter or slot schedule.

Define $\mathsf{Full}(S)$ by $\gcd(P_i,Q_i)\ge2^b+1$ for every component. The reference signer emits an ordinary record only if its digest is admissible and its candidate successor satisfies $\mathsf{Full}$. Otherwise it returns \textsf{RENEW\_FIRST} without disclosure. For renewal it tests $\mathsf{Full}$ after applying all replacements. These signer-side conditions preserve universal successor capacity when satisfied; they do not guarantee progress. In particular, Corollary~\ref{cor:equal} rejects every ordinary transition from a state containing an equal-index pair under this policy.

A DIRECT renewal payload is $\Enc(W,\nu)$ with a nonempty replacement list in the field order~\eqref{eq:replacement}, sorted by $(a,i)$ and without duplicates, and $0\le\nu<N_{\max}$. The checker validates identifiers, current heads and epochs, incremented epochs, $\lambda$-bit replacement heads, and positive lengths bounded by $L_{\max}$. The signer generates independent replacement seeds and searches distinct nonces under the fixed old state and fixed list. A candidate is admissible only if all its blocks satisfy $1+u_i(\nu)<d_i$. A failed bounded search publishes nothing and replaces nothing.

All entries are authorized in one record under the old state. After the opening checks, the checker first forms the consumed successor, then simultaneously installs the listed replacements at full length. Unreplaced chains remain at their consumed indices. A replacement head need not be an image of the old head. The signer persists the record, successor, and required secrets before publication and waits for confirmation exactly as in the segmented protocol. Partial replacement may fail to restore capacity, and $d_i=1$ cannot be repaired by nonce search.

\begin{proposition}[Ideal admissibility probability]\label{prop:nonce}
Fix $S$ and $W$. If distinct, previously unqueried trial inputs have independent uniform $kb$-bit digests, with $c_i=\min\{2^b,\max\{0,d_i-1\}\}$, then
\begin{equation}\label{eq:nonce-prob}
\rho(S)=\prod_{i=1}^k\frac{c_i}{2^b},\qquad
\Pr[\text{an admissible candidate in }M\text{ trials}]=1-(1-\rho(S))^M.
\end{equation}
For $\rho>0$, an unbounded ideal search has mean $1/\rho$ trials.
\end{proposition}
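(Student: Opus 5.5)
The plan is to reduce the nonce search to a product of independent per-component events, and then to a sequence of independent Bernoulli trials. Fix $S$ and $W$. Each trial nonce $\nu$ gives the payload $m_\nu=\Enc(W,\nu)$, and distinct nonces give distinct canonical encodings by Appendix~\ref{app:encoding}. The digest inputs $\Enc(\texttt{DIGEST},X(S),n+1,\texttt{RENEW},m_\nu)$ are therefore distinct and previously unqueried. The hypothesis then makes the $kb$-bit digests of the $M$ trials independent and uniform. I will first handle a single trial and afterwards combine the trials.

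For one trial, I use the most-significant-first block split. It is a bijection between $kb$-bit strings and $k$-tuples $(u_1,\ldots,u_k)\in\{0,\ldots,2^b-1\}^k$. So a uniform digest gives independent blocks, each uniform on $\{0,\ldots,2^b-1\}$. The candidate is admissible exactly when $1+u_i<d_i$ holds for every $i$, which is the rule in \eqref{eq:direct} and the DIRECT checker.

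Block $i$ satisfies this condition exactly for $u_i\in\{0,\ldots,d_i-2\}\cap\{0,\ldots,2^b-1\}$. This set has $\min\{2^b,\max\{0,d_i-1\}\}=c_i$ elements. The $\max\{0,\cdot\}$ covers $d_i\le1$, where the set is empty; the $\min$ covers $d_i-1>2^b$. The same count appears in \eqref{eq:capacity-product}. Since the blocks are independent, the probability that one trial is admissible is $\prod_i c_i/2^b=\rho(S)$. Equivalently, it is $|\mathcal D_{\rm adm}(S)|/2^{kb}$.

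For $M$ trials, the admissibility indicators are independent Bernoulli$(\rho)$ variables, because they are functions of independent digests. The probability that none succeeds is $(1-\rho)^M$, and taking the complement gives \eqref{eq:nonce-prob}. For the unbounded search, the index of the first success is geometric with parameter $\rho$. Its mean is $\sum_{M\ge0}\Pr[\text{no success in }M]=\sum_{M\ge0}(1-\rho)^M=1/\rho$ when $\rho>0$.

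The main obstacle is making sure the idealization is applied legitimately. The old state and $W$, and hence the prefix $X(S)$, must be fixed before the search. The nonce must be the only varying component. Canonical encoding must guarantee that distinct nonces never give colliding inputs. Once these points are checked, the rest is routine counting.
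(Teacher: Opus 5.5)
Your proposal is correct and follows the paper's proof: independent uniform blocks give each component admissibility probability $c_i/2^b$, their product is $\rho(S)$, independent trials give failure probability $(1-\rho)^M$, and the first success is geometric with mean $1/\rho$. Your added details (the block bijection, the explicit count giving $c_i$, and the tail-sum for the mean) are sound; note only that canonical encoding gives distinct inputs, while ``previously unqueried'' is part of the hypothesis rather than something you derive.
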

\begin{proof}
Independent blocks are admissible with probabilities $c_i/2^b$, giving the product. Independent trial failures give $(1-\rho)^M$, and the waiting time is geometric.
\end{proof}
This is an ideal search calculation, not a concrete SHA-256 guarantee. If replacement lengths make $\mathsf{Full}$ independent of the nonce, admissibility suffices for the final capacity test. Otherwise $\rho$ is only an upper bound on acceptance under the complete signer policy. Neither this probability nor renewal continuity proves security of adaptive nonce search.

\end{document}